\renewcommand{\cite}[1]{\citep{#1}}
\newcommand{\nc}{\newcommand}
\newcommand{\DMO}{\DeclareMathOperator}
\DeclareMathAlphabet\mathbfcal{OMS}{cmsy}{b}{n}
\newcommand{\badih}[1]{\textcolor{red}{[Badih: #1]}}
\newcommand{\silly}{\hspace*{1em}}
\newcommand{\Procedure}[2]{\STATE {{\bf procedure} {\sc #1}}} 
\newcommand{\State}{\STATE \silly}
\newcommand{\SubState}{\STATE \silly \silly}
\newcommand{\SubSubState}{\STATE \silly \silly \silly}
\newcommand{\Return}{{\bf return \/}}
\newcommand{\For}[1]{\State {\bf for\/} #1}
\newcommand{\SubFor}[1]{\SubState {\bf for\/} #1}
\newcommand{\If}[1]{\State {\bf if\/} #1}
\newcommand{\EndProcedure}{}
\newcommand{\EndIf}{}
\newcommand{\EndFor}{}
\newcommand{\poly}{\mathrm{poly}}
\nc{\MS}{\mathcal{S}}
\nc{\MP}{\mathcal{P}}
\nc{\MR}{\mathcal{R}}
\nc{\cM}{\mathcal{M}}
\nc{\cS}{\mathcal{S}}
\nc{\cI}{\mathcal{I}}
\nc{\cA}{\mathcal{A}}
\nc{\tcA}{\tilde{\cA}}
\nc{\MZ}{\mathcal{Z}}
\DMO{\Binom}{Binom}
\newcommand{\E}{\mathbb{E}}
\DMO{\Var}{Var}
\newcommand{\bb}{\mathbf{b}}
\newcommand{\bh}{\mathbf{h}}
\newcommand{\tbh}{\tilde{\bh}}
\newcommand{\bB}{\mathbf{B}}
\newcommand{\bc}{\mathbf{c}}
\newcommand{\bC}{\mathbf{C}}
\newcommand{\bX}{\mathbf{X}}
\nc{\tbx}{\tilde{\bx}}
\nc{\tbX}{\tilde{\bX}}
\nc{\tZ}{\tilde{Z}}
\nc{\tz}{\tilde{z}}
\newcommand{\bU}{\mathbf{U}}
\nc{\tbU}{\tilde{\bU}}
\newcommand{\bT}{\mathbf{T}}
\nc{\tbT}{\tilde{\bT}}
\newcommand{\bD}{\mathbf{D}}
\nc{\tbD}{\tilde{\bD}}
\newcommand{\bu}{\mathbf{u}}
\newcommand{\bw}{\mathbf{w}}
\newcommand{\bt}{\mathbf{t}}
\newcommand{\bx}{\mathbf{x}}
\newcommand{\bbf}{\mathbf{f}}
\newcommand{\by}{\mathbf{y}}
\newcommand{\br}{\mathbf{r}}
\newcommand{\bv}{\mathbf{v}}
\newcommand{\bz}{\mathbf{z}}
\newcommand{\bq}{\mathbf{q}}
\newcommand{\R}{\mathbb{R}}
\newcommand{\N}{\mathbb{N}}
\newcommand{\hr}{\hat{r}}
\newcommand{\hp}{\hat{p}}
\nc{\BN}{\mathbb{N}}
\newcommand{\Z}{\mathbb{Z}}
\nc{\BZ}{\mathbb{Z}}
\newcommand{\cL}{\mathcal{L}}
\newcommand{\cE}{\mathcal{E}}
\newcommand{\bone}{\mathbf{1}}
\newcommand{\bzero}{\mathbf{0}}
\newcommand{\bA}{\mathbf{A}}
\nc{\tbA}{\tilde{\bA}}
\newcommand{\bI}{\mathbf{I}}
\newcommand{\bQ}{\mathbf{Q}}
\newcommand{\bs}{\mathbf{s}}
\newcommand{\cD}{\mathcal{D}}
\newcommand{\tcD}{\tilde{\cD}}
\newcommand{\hcD}{\hat{\cD}}
\newcommand{\cbD}{\mathbfcal{D}}
\newcommand{\cDcentral}{\cD^{\mathrm{central}}}
\DeclareMathOperator{\supp}{supp}
\DeclareMathOperator{\DLap}{DLap}
\DeclareMathOperator{\NB}{NB}
\newcommand{\pad}{\mathrm{ext}}
\newcommand{\mysim}{\mathrm{sim}}
\newcommand{\mysum}{\mathrm{sum}}
\newcommand{\cor}{\mathrm{cor}}
\newcommand{\centralalg}{\textsc{CorrNoise}}
\newcommand{\centralalgm}{\textsc{CorrNoiseCentral}}
\newcommand{\randomizer}{\textsc{CorrNoiseRandomizer}}
\newcommand{\analyzer}{\textsc{CorrNoiseAnalyzer}}
\newcommand{\sparsevectorrandomizer}{\textsc{SparseVectorCorrNoiseRandomizer}}
\newcommand{\sparsevectoranalyzer}{\textsc{SparseVectorCorrNoiseAnalyzer}}
\newcommand{\eps}{\ensuremath{\varepsilon}}
\renewcommand{\epsilon}{\eps}
\newcommand{\bepsilon}{\boldsymbol{\epsilon}}
\newtheorem{theorem}{Theorem}
\newtheorem{observation}[theorem]{Observation}
\newtheorem{lemma}[theorem]{Lemma}
\newtheorem{definition}[theorem]{Definition}
\newtheorem{corollary}[theorem]{Corollary}
\begin{document}

\allowdisplaybreaks
\setlength{\abovedisplayskip}{2pt}
\setlength{\belowdisplayskip}{2pt}

\onecolumn 

\iffalse
\icmltitle{Differentially Private Aggregation in the Shuffle Model: \\ Almost Central Accuracy in Almost a Single Message}

\begin{icmlauthorlist}
\icmlauthor{Badih Ghazi}{googlemtv}
\icmlauthor{Ravi Kumar}{googlemtv}
\icmlauthor{Pasin Manurangsi}{googlemtv}
\icmlauthor{Rasmus Pagh}{ucopenhagen,googlemtv}
\icmlauthor{Amer Sinha}{googlesanbruno}
\end{icmlauthorlist}

\icmlaffiliation{ucopenhagen}{University of Copenhagen, Denmark}
\icmlaffiliation{googlemtv}{Google Research, Mountain View}
\icmlaffiliation{googlesanbruno}{Google, San Bruno}

\icmlcorrespondingauthor{Badih Ghazi}{badihghazi@gmail.com}
\icmlcorrespondingauthor{Ravi Kumar}{ravi.k53@gmail.com}
\icmlcorrespondingauthor{Pasin Manurangsi}{pasin@google.com}
\icmlcorrespondingauthor{Rasmus Pagh}{pagh@di.ku.dk}
\icmlcorrespondingauthor{Amer Sinha}{amersinha@google.com}

\printAffiliationsAndNotice{\icmlEqualContribution}

\vskip 0.3in

\else

\title{Differentially Private Aggregation in the Shuffle Model: \\ Almost Central Accuracy in Almost a Single Message}
\author{Badih Ghazi\footnotemark[1]
\hspace*{0.7cm}
Ravi Kumar\footnotemark[1]
\hspace*{0.7cm}
Pasin Manurangsi\footnotemark[1]
\hspace*{0.7cm}
Rasmus Pagh\footnotemark[2]
\hspace*{0.7cm}
Amer Sinha\footnotemark[3]
\\
\footnotemark[1] Google Research, Mountain View, CA \\
\footnotemark[2] University of Copenhagen, Denmark \\
\footnotemark[3] Google, San Bruno \\
\texttt{\{badihghazi, ravi.k53\}@gmail.com, \{pasin, amersinha\}@google.com, pagh@di.ku.dk}
}
\date{}

\maketitle
\fi

\begin{abstract}
The shuffle model of differential privacy has attracted attention in the literature due to it being a middle ground between the well-studied central and local models. In this work, we study the problem of summing (aggregating) real numbers or integers, a basic primitive in numerous machine learning tasks, in the shuffle model. We give a protocol achieving error arbitrarily close to that of the (Discrete) Laplace mechanism in the central model, while each user only sends $1 + o(1)$ short messages in expectation.
\end{abstract}


\section{Introduction}\label{sec:intro}

A principal goal within trustworthy machine learning is the design of privacy-preserving algorithms. In recent years, differential privacy (DP) \cite{dwork2006calibrating,dwork2006our} has gained significant popularity as a privacy notion due to the strong protections that it ensures. This has led to several practical deployments including by Google~\cite{erlingsson2014rappor,CNET2014Google}, Apple~\cite{greenberg2016apple,dp2017learning}, Microsoft~\cite{ding2017collecting}, and the U.S.~Census Bureau \cite{abowd2018us}. 
DP properties are often expressed in terms of parameters $\epsilon$ and $\delta$, with small values indicating that the algorithm is less likely to leak information about any individual within a set of $n$ people providing data. It is common to set $\epsilon$ to a small positive constant (e.g., $1$), and $\delta$ to inverse-polynomial in $n$.

DP can be enforced for any statistical or machine learning task, and it is particularly well-studied for the 
\emph{real summation} problem, where each user $i$ holds a real number $x_i \in [0, 1]$, and the goal is to estimate $\sum_i x_i$. This constitutes a basic building block within machine learning, with extensions including (private) distributed mean estimation~\citep[see, e.g.,][]{BiswasD0U20, girgis2020shuffled}, stochastic gradient descent~ \cite{song2013stochastic, bassily2014private, abadi2016deep, agarwal2018cpsgd}, and clustering \cite{StemmerK18,Stemmer20}.

The real summation problem, which is the focus of this work, has been well-studied in several models of DP. In the \emph{central} model where a curator has access to the raw data and is required to produce a private data release, the smallest possible absolute error is known to be $O(1/\epsilon)$; this can be achieved via the ubiquitous Laplace mechanism~\cite{dwork2006calibrating}, which is also known to be nearly optimal\footnote{Please see the supplementary material for more discussion.} for the most interesting regime of $\eps \leq 1$. In contrast, for the more privacy-stringent \emph{local} setting \cite{kasiviswanathan2008what}~\citep[also][]{warner1965randomized} where each message sent by a user is supposed to be private, the smallest error is known to be $\Theta_{\eps}(\sqrt{n})$ \cite{beimel2008distributed,ChanSS12}. This significant gap between the achievable central and local utilities has motivated the study of intermediate models of DP. The \emph{shuffle model} \cite{bittau17,erlingsson2019amplification,CheuSUZZ19} reflects the setting where the user reports are randomly permuted before being passed to the analyzer; the output of the shuffler is required to be private. Two variants of the shuffle model have been studied: in the \emph{multi-message} case~\citep[e.g.,][]{CheuSUZZ19}, each user can send multiple messages to the shuffler; in the \emph{single-message} setting each user sends one message~\citep[e.g.,][]{erlingsson2019amplification, BalleBGN19}.

For the real summation problem, it is known that the smallest possible absolute error in the single-message shuffle model\footnote{Here $\tilde{\Theta}_\eps(\cdot)$ hides a polylogarithmic factor in $1/\delta$, in addition to a dependency on $\eps$.} is $\tilde{\Theta}_{\eps}(n^{1/6})$ \cite{BalleBGN19}. In contrast, multi-message shuffle protocols exist with a near-central accuracy of $O(1/\epsilon)$ \cite{ghazi2019private, balle_merged}, but they suffer several drawbacks in that the number of messages sent per user is required to be at least $3$, each message has to be substantially longer than in the non-private case, and in particular, the number of bits of communication per user has to grow with $\log(1/\delta)/\log n$.
This (at least) three-fold communication blow-up relative to a non-private setting can be a limitation in real-time reporting use cases (where encryption of each message may be required and the associated cost can become dominant) and in federated learning settings (where great effort is undertaken to compress the gradients). Our work 
shows that near-central accuracy \emph{and} near-zero communication overhead are possible for real aggregation over sufficiently many users:


\begin{theorem} \label{thm:real-main}
For any $0 < \eps \leq O(1), \zeta, \delta \in (0, 1/2)$, there is an $(\eps, \delta)$-DP real summation  protocol in the shuffle model whose mean squared error (MSE) is at most the MSE of the Laplace mechanism with parameter $(1 - \zeta)\eps$, each user sends 
$1 + \tilde{O}_{\zeta, \epsilon}\left(\frac{\log(1/\delta)}{\sqrt{n}}\right)$
messages in expectation, and each message contains $\tfrac{1}{2} \log n + O(\log \frac{1}{\zeta} )$ bits.
\end{theorem}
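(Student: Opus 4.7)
The plan is to combine a randomized-rounding reduction to integer summation with an infinitely divisible correlated-noise construction whose aggregate matches $\DLap(e^{-\eps'/k})$. First, fix $k=\Theta(\sqrt n/\zeta)$ and let each user replace $x_i$ by an unbiased randomized rounding $y_i\in\{0,\dots,k\}$ of $kx_i$; the total rounding variance is $O(\zeta^2)$, comfortably absorbed by the slack $2/((1-\zeta)\eps)^2-2/\eps^2$ that the theorem leaves above the Laplace benchmark. After rescaling the analyzer's output by $1/k$ it suffices to solve integer summation over $\{0,\dots,k\}^n$ whose aggregate is perturbed by $\DLap(e^{-\eps'/k})$ for $\eps'=(1-\zeta/2)\eps$. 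The resulting message alphabet $\{-1,0,1,\dots,k\}$ needs $\tfrac12\log n+O(\log(1/\zeta))$ bits, matching the theorem's bit budget.

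Next I would invoke the infinite divisibility of the discrete Laplace: $\DLap(e^{-\eps'/k})=G^+-G^-$ with $G^\pm$ i.i.d.\ $\mathrm{Geom}(1-e^{-\eps'/k})$, and each geometric is a sum of $n$ independent $\NB(1/n,e^{-\eps'/k})$ variables. The randomizer for user $i$ samples $Z_i^+,Z_i^-\sim\NB(1/n,e^{-\eps'/k})$, always emits one value message $y_i$, and additionally emits $Z_i^+$ copies of $+1$ and $Z_i^-$ copies of $-1$. The analyzer sums all integers received and divides by $k$; by construction the estimator is unbiased with MSE equal to $\Var(\DLap(e^{-\eps'/k}))/k^2$, which is within $(1+o(1))$ of $\Var(\mathrm{Lap}(1/\eps'))=2/(\eps')^2$. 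The expected per-user message count is $1+2\,\E[\NB(1/n,e^{-\eps'/k})]=1+O(1/(\sqrt n\,\zeta\eps))$, and after padding to the high-probability upper bound required by the privacy proof this becomes $1+\tilde O_{\zeta,\eps}(\log(1/\delta)/\sqrt n)$, again matching the theorem.

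The main obstacle is the $(\eps,\delta)$-DP analysis of the shuffled multiset under a unit change $y_i\to y_i'$. I would split the argument into two parts: (i)~the global counts $\sum_i Z_i^\pm$ are marginally geometric and data-independent, so I can condition on them without loss; (ii)~conditional on these counts, the shuffled multiset of value messages and $\pm 1$ noise messages must be shown $(\eps,\delta)$-indistinguishable on the two neighboring databases. Part (ii) is the delicate step: the intended idea is that the $\pm 1$ noise messages form a ``blanket'' that absorbs the sensitivity of $y_i\to y_i'$ (at most $k$ in the integer scale) at the per-unit ratio $e^{\eps'/k}$, giving the target $e^{\eps'}$ bound, with the $\delta$ term coming from the tail event that the blanket is too thin. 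Carrying this through cleanly---so that the final privacy parameter is exactly $\eps$ rather than $(1+o(1))\eps$ and the $\delta$ budget is simultaneously met together with the $(1-\zeta)\eps$ accuracy benchmark---will likely require a direct hockey-stick-divergence computation exploiting the explicit Polya/negative-binomial structure of the noise, rather than a black-box appeal to a generic privacy amplification by shuffling lemma.
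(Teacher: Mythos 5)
Your reduction from real summation to integer $k$-summation via unbiased randomized rounding, the infinite divisibility of discrete Laplace into negative binomials, the unary encoding of the central noise as $\pm 1$ messages, and the choice $k=\Theta_{\eps,\zeta}(\sqrt n)$ all agree with the paper (Section~\ref{sec:real-sum} sets $\Delta=\lceil\tfrac{\eps}{2}\sqrt{n/\zeta}\rceil$ and invokes Theorem~\ref{thm:dsum-main} with $\gamma=\zeta/2$). The gap is in the integer-summation subroutine itself, where you propose that each user sends \emph{only} the value message $y_i$ plus $Z_i^+$ copies of $+1$ and $Z_i^-$ copies of $-1$. That protocol is not differentially private in the shuffle model for any $k\geq 2$, for a reason the paper states in the technical overview: after shuffling, the analyzer still observes the exact count $u_j$ of messages equal to $j$ for each $j\in\{2,\dots,k\}$, and for those values $u_j$ is \emph{deterministically} the number of users whose rounded input equals $j$. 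Changing one user from $y_i=5$ to $y_i'=7$ changes $u_5$ and $u_7$ by exactly one with probability $1$; no amount of $\pm 1$ noise touches those coordinates, so the hockey-stick divergence between the two output distributions is $1$, not $\delta$.

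Your step~(ii) ``blanket'' argument therefore cannot be carried through: the $\pm 1$ messages can only shield the coordinates they land in ($+1$ and $-1$), and the unit change $y_i\to y_i'$ leaves a detectable fingerprint in the untouched coordinates of the message-count histogram. The ingredient you are missing is the \emph{zero-sum noise atoms} $\cS$ (multisets such as $\{i,-\lceil i/2\rceil,-\lfloor i/2\rfloor\}$) that flood every coordinate $u_j$ without biasing the sum, together with the change-of-basis argument (the right inverse $\bC$ of $\tbA$ in Theorem~\ref{thm:privacy-main-generic} and the correlated-coordinate lemma, Lemma~\ref{lem:correlated-coord}) needed to prove that the resulting highly correlated noise vector is DP. This is precisely the paper's main technical contribution and cannot be routed around by a direct hockey-stick computation on the $\pm 1$-only protocol. (Your discretization and communication accounting are otherwise fine; you also conflate $k=\Theta(\sqrt n/\zeta)$ with the paper's $\Theta(\eps\sqrt{n/\zeta})$, but both land within the stated bit budget.)
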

Note that $\tilde{O}_{\zeta, \epsilon}$ hides a small $\poly(\log n, 1/\eps, 1/\zeta)$ term. 
Moreover, the number of bits per message is equal, up to lower order terms, to that needed to achieve MSE $O(1)$ even \emph{without} any privacy constraints.

Theorem~\ref{thm:real-main} follows from an analogous result for the case of \emph{integer} aggregation, where each user is given an element in the set $\{0,1,..., \Delta\}$ (with $\Delta$ an integer), and the goal of the analyzer is to estimate the sum of the users' inputs. We refer to this task as the \emph{$\Delta$-summation} problem.

For $\Delta$-summation, the standard mechanism in the central model is the Discrete Laplace (aka Geometric) mechanism, which first computes the true answer and then adds to it a noise term sampled from the Discrete Laplace distribution\footnote{The Discrete Laplace distribution with parameter $s$, denoted by $\DLap(s)$, has probability mass $\frac{1 - e^{-s}}{1 + e^{-s}} \cdot e^{-s|k|}$ at each $k \in \Z$.} with parameter $\eps/\Delta$~\cite{GhoshRS12}. 
We can achieve an error arbitrarily close to this mechanism in the shuffle model, with minimal communication overhead:

\begin{theorem} \label{thm:dsum-main}
For any $0 < \eps \leq O(1), \gamma, \delta \in (0, 1/2), \Delta \in \N$, there is an $(\eps, \delta)$-DP $\Delta$-summation protocol in the shuffle model whose MSE is at most that of the Discrete Laplace mechanism with parameter $(1 - \gamma)\eps/\Delta$, and where each user sends $1 + \tilde{O}\left(\frac{\Delta \log(1/\delta)}{\gamma \epsilon n}\right)$ 
messages in expectation, with each message containing $\lceil \log \Delta \rceil + 1$ bits.
\end{theorem}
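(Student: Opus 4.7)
The plan is to construct a multi-message shuffle protocol $\centralalg$ in which each user sends exactly one message encoding their input $x_i$ together with a small random number of noise messages, where the noise is designed around the infinite divisibility of the Discrete Laplace distribution. Since $\DLap(s) \stackrel{d}{=} G - G'$ with $G, G' \sim \NB(1, 1-e^{-s})$ independent, and since $\NB(r_1, p) + \NB(r_2, p) \stackrel{d}{=} \NB(r_1+r_2, p)$ for independent negative binomials, each user can sample $N_i^{\pm} \sim \NB(1/n,\, 1-e^{-(1-\gamma)\eps/\Delta})$ independently so that $\sum_{i=1}^n (N_i^+ - N_i^-) \sim \DLap((1-\gamma)\eps/\Delta)$. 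The randomizer \randomizer\ on input $x_i$ sends $x_i$ as one message together with $N_i^+$ ``$+1$''-noise messages and $N_i^-$ ``$-1$''-noise messages, each encoded in $\lceil\log\Delta\rceil + 1$ bits (using the extra bit for the sign), while \analyzer\ simply sums all signed message values. Because a single-user input change also shifts individual bucket counts in the shuffled view and not just the sum, the design additionally distributes a small amount of masking noise across every bucket value in $\{0,1,\ldots,\Delta\}$, so that every histogram bucket carries enough randomness to hide a $\pm 1$ change.

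The utility is immediate from the divisibility decomposition: the analyzer's output equals $\sum_i x_i + Y$ with $Y \sim \DLap((1-\gamma)\eps/\Delta)$, matching the claimed MSE. For the message count, using $\E[\NB(r,p)] = r(1-p)/p$, the expected number of $\pm$-noise messages per user is $(2/n)\cdot e^{-(1-\gamma)\eps/\Delta}/(1 - e^{-(1-\gamma)\eps/\Delta}) = O(\Delta/((1-\gamma)\eps n))$, and the expected number of masking messages per user is $\tilde O(\Delta\log(1/\delta)/(\gamma\eps n))$; summing these with the single input message gives the claimed $1 + \tilde O(\Delta\log(1/\delta)/(\gamma\eps n))$. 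Every message fits in $\lceil\log\Delta\rceil + 1$ bits since all values lie in $\{-\Delta,\ldots,\Delta\}$.

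The principal obstacle is the privacy analysis, which must show $(\eps,\delta)$-DP of the shuffled multiset view. For two neighboring databases differing in user~1's value (say $a$ vs.\ $b$), the shuffled histogram $(C_v)_v$ changes by $-\be_a + \be_b$. My plan is to split the privacy budget into two pieces: (i) the aggregate $\DLap((1-\gamma)\eps/\Delta)$ noise already provides $(1-\gamma)\eps$-pure DP for the sum coordinate via the standard Discrete Laplace mechanism analysis; (ii) the bucket-masking noise absorbs the at most two single-bucket shifts with $(\gamma\eps,\delta)$-DP. To get the claimed $\eps^{-1}$ (rather than $\eps^{-2}$) scaling in (ii), I would use per-bucket geometric masking noise, whose consecutive-PMF ratio is exactly $1-p$: setting $p \approx \gamma\eps$ yields ratio $e^{-\gamma\eps}$ on the bulk of the distribution, and the small-mass boundary behavior at $C_v = 0$ is absorbed into the $\delta$ budget via a standard tail bound. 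Composing (i) and (ii) gives the target $(\eps,\delta)$-DP. The hardest part will be calibrating the per-bucket noise distribution so that $(\gamma\eps,\delta)$-DP at the (at most two) affected buckets holds simultaneously with the tight $\tilde O(\Delta\log(1/\delta)/(\gamma\eps n))$ per-user message bound; this likely requires a careful boundary analysis of the masking noise at zero and a Poissonization-style argument to decouple bucket counts into independent geometrics.
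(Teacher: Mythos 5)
The central gap in your plan is the nature of the masking noise. You propose to flood each bucket count with independent geometric (or geometric-like) noise, but the analyzer just \emph{sums} all message values. Any extra message of value $v \neq 0$ changes the output by $v$, so independent per-bucket masking noise on buckets $\{1,\dots,\Delta\}$ would corrupt the sum and invalidate the ``utility is immediate'' claim; the output would no longer be $\sum_i x_i + Y$ with $Y \sim \DLap$. The paper avoids this by restricting the masking noise to be \emph{zero-sum}: the extra messages are drawn from carefully chosen multisets (``noise atoms'') of nonzero values summing to zero, such as $\{-1,+1\}$ and $\{i,-\lceil i/2\rceil,-\lfloor i/2\rfloor\}$, so the masking contributes exactly nothing to the sum (\Cref{obs:error}, \Cref{thm:right-inv}). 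This is not an optional optimization; it is forced by the fact that the analyzer is a plain sum.

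Once you switch to zero-sum atoms, a second problem appears that your ``Poissonization to decouple bucket counts into independent geometrics'' step does not address and in fact cannot: the per-bucket counts are now genuinely \emph{correlated} by construction (each atom simultaneously increments several buckets), and the correlations are deterministic, not something Poissonization can wash out. The paper's main technical contribution is precisely to handle this correlation via an integer-valued change of basis, using a right inverse $\bC$ of the truncated atom matrix $\tbA$ to exhibit a bijective reparametrization of the analyzer's view under which the coordinates \emph{are} independently noised (\Cref{thm:privacy-main-generic}). Finding such a $\bC$ over the integers with small column ``sensitivity'' $\|\bt\|_{\cS}$ is what drives the $\tilde O(\Delta)$ factor in the communication (\Cref{thm:right-inv}). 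A third missing piece is that the central $\DLap$ noise itself leaks into the $-1$ and $+1$ bucket counts and must also be flooded, but because the same $z^{-1}$ appears in both the sum coordinate and the $-1$ count, the flooding argument needs a dedicated conditioning lemma (\Cref{lem:correlated-coord}); budget-splitting (i) versus (ii) as you describe does not compose cleanly without it. As a minor point, with the paper's parameterization of $\NB$, one needs $p = e^{-s}$, not $1-e^{-s}$, for $\NB(1,p)-\NB(1,p) = \DLap(s)$.
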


In~\Cref{thm:dsum-main}, the $\tilde{O}(\cdot)$ hides a $\poly\log\Delta$ factor.
We also note that the number of bits per message in the protocol is within a single bit from the minimum message length needed to compute the sum without any privacy constraints.  Incidentally, for $\Delta = 1$, \Cref{thm:dsum-main} improves the
communication overhead obtained by~\citet{GKMP20-icml} from $O\left(\frac{\log^2(1/\delta)}{\eps^2 n}\right)$ to $O\left(\frac{\log(1/\delta)}{\eps n}\right)$.  (This improvement turns out to be crucial in practice, as our experiments show.)

Using Theorem~\ref{thm:real-main} as a black-box, we obtain the following corollary for the \emph{$1$-sparse vector summation} problem, where each user is given a $1$-sparse (possibly high-dimensional) vector of norm at most $1$, and the goal is to compute the sum of all user vectors with minimal $\ell_{2}$ error.

\begin{corollary} \label{cor:1-sparse}
For every $d \in \mathbb{N}$, and $0 < \eps \leq O(1), \zeta, \delta \in (0, 1/2)$, there is an $(\eps, \delta)$-DP algorithm for $1$-sparse vector summation in $d$ dimensions in the shuffle model whose $\ell_2$ error is at most that of the Laplace mechanism with parameter $(1 - \zeta)\eps/2$, and where each user sends 
$1 + \tilde{O}_{\zeta, \epsilon}\left( \frac{d \log(1/\delta)}{\sqrt{n}}\right)$
messages in expectation, and each message contains $\log{d} + \tfrac{1}{2} \log n + O(\log \frac{1}{\zeta})$ bits.
\end{corollary}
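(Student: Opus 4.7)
The plan is to reduce $1$-sparse vector summation in $d$ dimensions to $d$ parallel instances of the real summation protocol from Theorem~\ref{thm:real-main}, one per coordinate. For each $c \in [d]$, I would instantiate the protocol with privacy parameters $(\eps/2, \delta/2)$ and accuracy slack $\zeta$; user $i$ with sparse vector $v_i = x_i \cdot e_{j_i}$ would contribute $v_i[c]$ to instance $c$. To fit the $[0,1]$-input interface of Theorem~\ref{thm:real-main}, one can split each $x_i \in [-1,1]$ into positive and negative parts and run one sub-instance per sign (so $2d$ sub-instances in total, with independent shufflers tagged by $(c,\pm)$), or invoke a signed-input variant of Theorem~\ref{thm:real-main}.

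For privacy, replacing a single user's datum $(j_i, x_i)$ by $(j_i', x_i')$ changes the inputs of at most two sub-instances: the one indexed by the old $(j_i, \mathrm{sign}(x_i))$ and the one indexed by the new $(j_i', \mathrm{sign}(x_i'))$. Since each sub-instance is separately $(\eps/2, \delta/2)$-DP, basic sequential composition over the two affected shufflers gives the desired $(\eps, \delta)$-DP. For accuracy, each per-coordinate estimate has mean-squared error at most $\Var(\mathrm{Lap}(2/((1-\zeta)\eps)))$, so summing independent per-coordinate variances yields expected squared $\ell_2$ error $d \cdot \Var(\mathrm{Lap}(2/((1-\zeta)\eps)))$, which is exactly the $\ell_2$ error of the central Laplace mechanism with parameter $(1-\zeta)\eps/2$ on $d$-dimensional inputs of $\ell_1$-sensitivity $2$. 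Any residual constant-factor loss from sign splitting can be absorbed by slightly shrinking $\zeta$.

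The delicate part is the communication bound: a naive reduction would pay $d \cdot (1 + \tilde{O}(\log(1/\delta)/\sqrt{n}))$ messages per user, which blows up the leading term. To obtain $1 + \tilde{O}_{\zeta,\eps}(d \log(1/\delta)/\sqrt{n})$, I would use the fact that the protocol of Theorem~\ref{thm:real-main} can be decomposed into one input-carrying message plus an input-independent stream of \emph{blanket noise} messages of expected length $\tilde{O}(\log(1/\delta)/\sqrt{n})$. For the $d-1$ sub-instances in which user $i$'s contribution is $0$, the single input-carrying message is unnecessary (a zero contribution does not alter the sum) and can be dropped, while the blanket noise must still be sent to every sub-instance. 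Each message concatenates a $\log d$-bit coordinate-and-sign label with the $\tfrac{1}{2}\log n + O(\log(1/\zeta))$-bit value payload of Theorem~\ref{thm:real-main}, giving the stated per-message length. The principal obstacle is precisely this last step: legitimizing the drop-the-zero-input-message optimization under a purely black-box invocation of Theorem~\ref{thm:real-main}, which requires either citing a structural property of the underlying construction (the user's value enters the shuffle additively, separately from the noise stream) or re-deriving the accuracy analysis directly in the signed, multi-instance setting.
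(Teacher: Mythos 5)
Your proposal follows essentially the same route as the paper: run the real-summation randomizer once per coordinate with privacy budget $(\eps/2,\delta/2)$, tag each message with a $\log d$-bit coordinate label, argue privacy via composition over the at most two coordinates where two neighboring $1$-sparse inputs can differ, and bound communication by observing that the $d-1$ zero-valued coordinates contribute only their input-independent noise stream. The ``obstacle'' you flag at the end is real in a strict black-box reading of Theorem~\ref{thm:real-main}, and the paper resolves it exactly as you anticipate: its proof explicitly invokes the structural fact that \randomizer\ with a zero input sends no input message and only $\tilde{O}_{\zeta,\eps}(\log(1/\delta)/\sqrt{n})$ noise messages in expectation, so your caution is well-placed but not a gap in the argument. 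The sign-splitting into $2d$ sub-instances is an extra bookkeeping step you added that the paper's proof omits (it implicitly treats coordinate values as lying in $[0,1]$); it is harmless and arguably more faithful to ``norm at most $1$,'' with the $\pm$ bit absorbed into the $O(\log\frac{1}{\zeta})$ slack in the per-message length.
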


\subsection{Technical Overview}

We will now describe the high-level technical ideas underlying our protocol and its analysis. Since the real summation protocol can be obtained from the $\Delta$-summation protocol using known randomized discretization techniques (e.g.,~from \citet{balle_merged}), we focus only on the latter. For simplicity of presentation, we will sometimes be informal here; everything will be formalized later.


\paragraph{Infinite Divisibility.} To achieve a similar performance to the central-DP Discrete Laplace mechanism (described before \Cref{thm:dsum-main}) in the shuffle model, we face several obstacles. To begin with, the noise has to be divided among all users, instead of being added centrally. Fortunately, this can be solved through the \emph{infinite divisibility} of Discrete Laplace distributions\footnote{See~\cite{GoryczkaX17} for a discussion on distributed noise generation via infinite divisibility.}: there is a distribution $\cD'$ for which, if each user $i$ samples a noise $z_i$ independently from $\cD'$, then $z_1 + \cdots + z_n$ has the same distribution as $\DLap(\eps/\Delta)$. 

To implement the above idea in the shuffle model, each user has to be able to send their noise $z_i$ to the shuffler. Following~\citet{GKMP20-icml}, we can send such a noise in unary\footnote{Since the distribution $\cD'$ has a small tail probability, $z_i$ will mostly be in $\{0, -1, +1\}$, meaning that non-unary encoding of the noise does not significantly reduce the communication.}, i.e., if $z_i > 0$ we send the $+1$ message $z_i$ times and otherwise we send the $-1$ message $-z_i$ times. This is in addition to user $i$ sending their own input $x_i$ (in binary\footnote{If we were to send $x_i$ in unary similar to the noise, it would require possibly as many as $\Delta$ messages, which is undesirable for us since we later pick $\Delta$ to be $O_{\eps, \delta}(\sqrt{n})$ for real summation.}, as a single message) if it is non-zero. The analyzer is simple: sum up all the messages.

Unfortunately, this zero-sum noise approach is \emph{not} shuffle DP for $\Delta > 1$ because, even after shuffling, the analyzer can still see $u_j$, the number of messages $j$, which is exactly the number of users whose input is equal to $j$ for $j \in \{2, \dots, \Delta\}$. 

\paragraph{Zero-Sum Noise over Non-Binary Alphabets.}
To overcome this issue, we have to 
`noise'' the values $u_j$ themselves, while at the same time preserving the accuracy. We achieve this by making some users send additional messages whose sum is equal to zero; e.g., a user may send $-1, -1, +2$ in conjunction with previously described messages. Since the analyzer just sums up all the messages, this additional zero-sum noise still does not affect accuracy.

The bulk of our technical work is in the privacy proof of such a protocol. To understand the challenge, notice that the analyzer still sees the $u_j$'s, which are now highly \emph{correlated} due to the zero-sum noise added. This is unlike most DP algorithms in the literature where noise terms are added independently to each coordinate. Our main technical insight is that, by a careful change of basis, we can ``reduce'' the view to the independent-noise case.

To illustrate our technique, let us consider the case where $\Delta = 2$. In this case, there are two zero-sum ``noise atoms'' that a user might send: $(-1, +1)$ and $(-1, -1, +2)$. These two kinds of noise are sent independently, i.e., whether the user sends $(-1, +1)$ does not affect whether $(-1, -1, +2)$ is also sent. After shuffling, the analyzer sees $(u_{-1}, u_{+1}, u_{+2})$. Observe that there is a one-to-one mapping between this and $(v_1, v_2, v_3)$ defined by $v_1 := u_{-1} - 2 \cdot u_{+2}, v_2 := u_{-1} - u_{+1} - u_{+2}, v_3 := - u_{-1} + u_{+1} + 2  \cdot u_{+2}$, meaning that we may prove the privacy of the latter instead. Consider the effect of sending the $(+1, -1)$ noise: $v_1$ is increased by one, whereas $v_2, v_3$ are completely unaffected. Similarly, when we send $(-1, -1, +2)$ noise, $v_2$ is increased by one, whereas $v_1, v_3$ are completely unaffected. Hence, the noise added to $v_1, v_2$ are now independent!  Finally, $v_3$ is exactly the sum of all messages, which was noised by the $\DLap$ noise explained earlier.

A vital detail omitted in the previous discussion is that the $\DLap$ noise, which affects $u_{-1}, u_{+1}$, is \emph{not} canceled out in $v_1, v_2$. Indeed, in our formal proof we need a special argument (\Cref{lem:correlated-coord}) to deal with this noise.

Moreover, generalizing this approach to larger values of $\Delta$ requires overcoming additional challenges: (i) the basis change has to be carried out over the \emph{integers}, which precludes a direct use of classic tools from linear algebra such as the Gram--Schmidt process, and (ii) special care has to be taken when selecting the new basis so as to ensure that the sensitivity does not significantly increase, which would require more added noise (this complication leads to the usage of the \emph{$\bQ$-linear query} problem in Section~\ref{sec:privacy_pf_and_params}).




\subsection{Related Work}



\paragraph{Summation in the Shuffle Model.} 
Our work is most closely related to that of~\citet{GKMP20-icml} who gave a protocol for the case where $\Delta = 1$ (i.e., binary summation) and our protocol can be viewed as a generalization of theirs. As explained above, this requires significant novel technical and conceptual ideas; for example, the basis change was not (directly) required by~\citet{GKMP20-icml}. 

The idea of splitting the input into multiple additive shares dates back to the ``split-and-mix'' protocol of~\citet{ishai2006cryptography} whose analysis was improved in~\citet{ghazi2019private,balle_merged} to get the aforementioned shuffle DP algorithms for aggregation. These analyses all crucially rely on the addition being over a finite group.
Since we actually want to sum over integers and there are $n$ users, this approach requires the group size to be at least $n \Delta$ to prevent an ``overflow''.
This also means that each user needs to send at least $\log (n\Delta)$ bits. On the other hand, by dealing with integers directly, each of our messages is only $\lceil \log \Delta \rceil + 1$ bits, further reducing the communication.


From a technical standpoint, our approach is also different from that of~\citet{ishai2006cryptography} as we analyze the privacy of the protocol, instead of its security as in their paper. This allows us to overcome the known lower bound of $\Omega(\frac{\log(1/\delta)}{\log{n}})$ on the number of messages for information-theoretic security~\cite{ghazi2019private}, and obtain a DP protocol with $\tilde{O}_{\zeta, \epsilon}\left(\frac{\log(1/\delta)}{\sqrt{n}}\right)$ messages (where $\zeta$ is as in Theorem~\ref{thm:real-main}).

\paragraph{The Shuffle DP Model.}
Recent research on the shuffle model of DP includes work on aggregation mentioned above \cite{BalleBGN19, ghazi2019private, balle_merged}, analytics tasks including computing histograms and heavy hitters \cite{anon-power, balcer2019separating, ghazi2020pure, GKMP20-icml,Cheu-Zhilyaev-histogramfake}, counting distinct elements \cite{balcer2021connecting,chen2020distributed} and private mean estimation \cite{girgis2020shuffled}, as well as $k$-means clustering \cite{one-round-clustering}.

\paragraph{Aggregation in Machine Learning.}
We note that communication-efficient private aggregation is a core primitive in \emph{federated learning} (see Section $4$ of \citet{kairouz2019advances} and the references therein). It is also naturally related to mean estimation in distributed models of DP~\citep[e.g.,][]{gaboardi2019locally}. Finally, we point out that communication efficiency is a common requirement in distributed learning and optimization, and substantial effort is spent on compression of the messages sent by users, through multiple methods including hashing, pruning, and quantization~\citep[see, e.g.,][]{zhang2013information, alistarh2016qsgd, suresh2017distributed, AcharyaSZ19, chen2020breaking}.

\subsection{Organization}
We start with some background in Section~\ref{sec:prelims}. Our protocol is presented in Section~\ref{sec:protocol}. Its privacy property is established and the parameters are set in Section~\ref{sec:privacy_pf_and_params}. Experimental results are given in Section~\ref{sec:experiments}. We discuss some interesting future directions in Section~\ref{sec:conclusion}. All missing proofs can be found in the Supplementary Material (SM).

\section{Preliminaries and Notation}\label{sec:prelims}

We use $[m]$ to denote $\{1, \dots, m\}$.


\paragraph{Probability.}
For any distribution $\cD$, we write $z \sim \cD$ to denote a random variable $z$ that is distributed as $\cD$. For two distributions $\cD_1, \cD_2$, let $\cD_1 + \cD_2$ (resp., $\cD_1 - \cD_2$) denote the distribution of $z_1 + z_2$ (resp., $z_1 - z_2$) where $z_1 \sim \cD_1, z_2 \sim \cD_2$ are independent. For $k \in \R$, we use $k + \cD$ to denote the distribution of $k + z$ where $z \sim \cD$.

A distribution $\cD$ over non-negative integers is said to be \emph{infinitely divisible} if and only if, for every $n \in \N$, there exists a distribution $\cD_{/n}$ such that $\cD_{/n} + \cdots + \cD_{/n}$ is identical to $\cD$, where the sum is over $n$ distributions.

The negative binomial distribution with parameters $r > 0$, $p \in [0, 1]$, denoted $\NB(r, p)$, has probability mass $\binom{k + r - 1}{k}(1 - p)^r p^k$ at all $k \in \Z_{\geq 0}$. $\NB(r, p)$ is infinitely divisible; specifically, $\NB(r, p)_{/n} = \NB(r/n, p)$.

\paragraph{Differential Privacy.}
Two input datasets $X = (x_1, \dots, x_n)$ and $X' = (x'_1, \dots, x'_n)$ are said to be \emph{neighboring} if and only if they differ on at most a single user's input, i.e., $x_i = x'_i$ for all but one $i \in [n]$.

\begin{definition}[Differential Privacy (DP)~\citet{dwork2006calibrating,dwork2006our}]\label{def:dp_general}
Let $\epsilon, \delta \in \mathbb{R}_{\geq 0}$. A randomized algorithm $\mathcal{A}$ taking as input a dataset is said to be \emph{$(\epsilon, \delta)$-differentially private} ($(\epsilon, \delta)$-DP) if for any two neighboring datasets $X$ and $X'$, and for any subset $S$ of outputs of $\mathcal{A}$, it holds that $\Pr[\mathcal{A}(X) \in S] \le e^{\epsilon} \cdot \Pr[\mathcal{A}(X') \in S] + \delta$.
\end{definition}

\paragraph{Shuffle DP Model.} A protocol over $n$ inputs in the shuffle DP model~\cite{bittau17,erlingsson2019amplification,CheuSUZZ19} consists of three procedures. A \emph{local randomizer} takes an input $x_i$ and outputs a set of messages. The \emph{shuffler} takes the multisets output by the local randomizer applied to each of $x_1,\dots,x_n$, and produces a random permutation of the messages as output. Finally, the \emph{analyzer} takes the output of the shuffler and computes the output of the protocol. Privacy in the shuffle model is enforced on the output of the shuffler when a single input is changed.

\section{Generic Protocol Description}\label{sec:protocol}

Below we describe the protocol for $\Delta$-summation that is private in the shuffle DP model. In our protocol, the randomizer will send messages, each of which is an integer in $\{-\Delta, \dots, +\Delta\}$. The analyzer simply sums up all the incoming messages. The messages sent from the randomizer can be categorized into three classes:
\begin{itemize}[nosep]
\item \textbf{Input}: each user $i$ will send $x_i$ if it is non-zero. 
\item \textbf{Central Noise}: This is the noise whose sum is equal to the Discrete Laplace noise commonly used algorithms in the central DP model. 
This noise is sent in ``unary'' as $+1$ or $-1$ messages.
\item \textbf{Zero-Sum Noise}: Finally, we ``flood'' the messages with noise that cancels out. This noise comes from a carefully chosen sub-collection $\cS$ of the collection of all multisets of $\{-\Delta, \dots, +\Delta\} \setminus \{ 0 \}$ whose sum of elements is equal to zero (e.g., $\{-1, -1, +2\}$ may belong to $\cS$).\footnote{Note that while $\cS$ may be infinite, we will 
later set it to be finite, resulting in an efficient protocol.} For  more details, see Theorem~\ref{thm:right-inv} and the paragraph succeeding it.
We will refer to each $\bs \in \cS$ as a \emph{noise atom}. 
\end{itemize}

Algorithms~\ref{alg:delta_randomizer} and~\ref{alg:delta_analyzer} show the generic form of our protocol, which we refer to as the \emph{Correlated Noise} mechanism.
The protocol is specified by the following infinitely divisible distributions over $\Z_{\geq 0}$: the ``central'' noise distribution $\cDcentral$, and for every $\bs \in \cS$, the ``flooding'' noise distribution $\cD^\bs$.
%
\begin{algorithm}[t]
\caption{\small $\Delta$-Summation Randomizer} \label{alg:delta_randomizer}
\begin{algorithmic}[1]
\Procedure{\randomizer$_n(x_i)$}{}
\If{$x_i \ne 0$}
\SubState Send $x_i$
\EndIf
\State Sample $z^{+1}_i, z^{-1}_i \sim \cDcentral_{/n}$
\State Send $z^{+1}_i$ copies of $+1$, and $z^{-1}_i$ copies of $-1$ \label{line:plusoneminusone} 
\For{$\bs \in \cS$}
\SubState Sample $z^{\bs}_i \sim \cD^{\bs}_{/n}$ 
\SubFor{$m \in \bs$} \label{line:for-loop-multiset}
\SubSubState Send $z^{\bs}_i$ copies of $m$
\EndFor
\EndFor 
\EndProcedure
\end{algorithmic}
\end{algorithm}
\begin{algorithm}[t]
\caption{\small $\Delta$-Summation Analyzer} \label{alg:delta_analyzer}
\begin{algorithmic}[1]
\Procedure{\analyzer}{}
\State $R \leftarrow$ multiset of messages received
\State \Return $\sum_{y \in R} y$
\EndProcedure
\end{algorithmic}
\end{algorithm}

Note that since $\bs$ is a multiset, Line~\ref{line:for-loop-multiset} goes over each element the same number of times it appears in $\bs$; e.g., if $\bs = \{-1, -1, +2\}$, the iteration $m = -1$ is executed twice.

\subsection{Error and Communication Complexity}

We now state generic forms for the MSE and communication cost of the protocol: 


\begin{observation} \label{obs:error}
MSE is $2\Var(\cDcentral)$.
\end{observation}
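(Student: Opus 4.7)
The plan is to directly compute the error of the protocol by writing the analyzer's output explicitly and identifying which terms survive.

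First, I would expand the analyzer's output. By Algorithm~\ref{alg:delta_analyzer}, the analyzer returns the sum of all messages received. According to Algorithm~\ref{alg:delta_randomizer}, user $i$ contributes (i) $x_i$ (if nonzero), (ii) the signed unary central noise $z_i^{+1} - z_i^{-1}$, and (iii) for each $\bs \in \cS$, a contribution of $z_i^{\bs} \cdot \sum_{m \in \bs} m$ from the zero-sum noise block. Thus the analyzer's output equals
\begin{equation*}
\sum_{i=1}^{n} x_i \;+\; \sum_{i=1}^{n}\bigl(z_i^{+1} - z_i^{-1}\bigr) \;+\; \sum_{\bs \in \cS}\Bigl(\sum_{m \in \bs} m\Bigr)\sum_{i=1}^{n} z_i^{\bs}.
\end{equation*}

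Next I would eliminate the zero-sum noise contribution: by the very definition of $\cS$ (each $\bs \in \cS$ is a multiset of $\{-\Delta,\dots,+\Delta\}\setminus\{0\}$ whose elements sum to $0$), we have $\sum_{m \in \bs} m = 0$ for every $\bs \in \cS$, so the third summand vanishes identically. The estimate therefore deviates from $\sum_i x_i$ only by $Z^{+1} - Z^{-1}$, where $Z^{\pm 1} := \sum_i z_i^{\pm 1}$.

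Finally I would invoke infinite divisibility to identify the distribution of the residual noise: since $z_i^{+1}, z_i^{-1} \sim \cDcentral_{/n}$ independently across $i$ and across the two signs, the sums $Z^{+1}$ and $Z^{-1}$ are independent draws from $\cDcentral$. Hence the error $Z^{+1} - Z^{-1}$ has mean zero (by symmetry of two i.i.d.\ copies), so its MSE equals its variance, which by independence is $\Var(Z^{+1}) + \Var(Z^{-1}) = 2\Var(\cDcentral)$.

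Since every step is essentially a one-line observation (the multiset sum being zero, infinite divisibility, variance of a difference of i.i.d.\ variables), there is no real obstacle; the only thing to watch is to explicitly note that the zero-sum noise does not merely have \emph{expected} contribution zero but is identically zero, so it does not inflate the variance.
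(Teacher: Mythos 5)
Your proof matches the paper's essentially step for step: identify that the zero-sum noise cancels exactly, use infinite divisibility to see the residual noise is $\cDcentral - \cDcentral$ with the two draws independent, and compute the resulting MSE as $2\Var(\cDcentral)$. The only cosmetic difference is that you spell out the last variance computation, which the paper leaves implicit.
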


We stress here that the distribution $\cDcentral$ itself is \emph{not} the Discrete Laplace distribution; we pick it so that $\cDcentral - \cDcentral$ is $\DLap$. As a result, $2\Var(\cDcentral)$ is indeed equal to the variance of the Discrete Laplace noise.

\begin{observation} \label{obs:comm}
Each user sends at most $1 + $ $\frac{1}{n}\left(2\E[\cDcentral] + \sum_{\bs \in \cS} |\bs| \cdot \E[\cD^{\bs}]\right)$ messages in expectation, each consisting of $\lceil \log \Delta \rceil + 1$ bits.
\end{observation}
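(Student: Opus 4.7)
The plan is to compute the expected number of messages by going through \textsc{\randomizer} line by line and summing contributions, then argue about the bit length separately.

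First, I would observe that user $i$ emits messages in three places. The input step sends a single message exactly when $x_i \neq 0$, contributing an indicator with expectation at most $1$. The central-noise step on Line~\ref{line:plusoneminusone} sends $z_i^{+1} + z_i^{-1}$ messages, where $z_i^{+1}, z_i^{-1} \sim \cDcentral_{/n}$ are independent. The zero-sum noise step sends, for each $\bs \in \cS$, exactly $|\bs| \cdot z_i^{\bs}$ messages (since the inner \textbf{for} loop on Line~\ref{line:for-loop-multiset} iterates over the multiset $\bs$ with multiplicity, and each iteration emits $z_i^{\bs}$ copies). By linearity of expectation, the total expected message count is at most
\begin{equation*}
1 + 2\,\E[\cDcentral_{/n}] + \sum_{\bs \in \cS} |\bs| \cdot \E[\cD^{\bs}_{/n}].
\end{equation*}

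Next I would invoke infinite divisibility: because $\cDcentral$ is infinitely divisible and $\cDcentral_{/n}$ is defined so that $n$ independent copies sum in distribution to $\cDcentral$, linearity of expectation gives $\E[\cDcentral_{/n}] = \E[\cDcentral]/n$. The same argument applied to each $\cD^{\bs}$ gives $\E[\cD^{\bs}_{/n}] = \E[\cD^{\bs}]/n$. Substituting yields the claimed bound $1 + \frac{1}{n}\bigl(2\,\E[\cDcentral] + \sum_{\bs \in \cS} |\bs| \cdot \E[\cD^{\bs}]\bigr)$.

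For the per-message bit length, every message produced by the randomizer lies in $\{-\Delta, \dots, +\Delta\}\setminus\{0\}$: the input $x_i \in \{1, \dots, \Delta\}$ when sent, the central-noise messages are $\pm 1$, and the zero-sum noise messages are elements of some $\bs \in \cS$, which by construction is a multiset over $\{-\Delta, \dots, +\Delta\} \setminus \{0\}$. Such a value can be encoded by a sign bit followed by $\lceil \log \Delta \rceil$ bits for its magnitude, for a total of $\lceil \log \Delta \rceil + 1$ bits per message.

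There is no real obstacle here; the proof is essentially bookkeeping. The one subtlety worth flagging explicitly is making sure the factor $|\bs|$ correctly accounts for multiplicities in the multiset $\bs$, since the inner loop iterates over $\bs$ element-by-element rather than over its support—this matches the noted convention right after Algorithm~\ref{alg:delta_analyzer}.
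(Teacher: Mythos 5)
Your proof is correct and follows essentially the same bookkeeping argument as the paper: decompose the message count by the three sources (input, central noise, zero-sum noise), apply linearity of expectation, use $\E[\cD_{/n}] = \E[\cD]/n$ from infinite divisibility, and note the $\lceil \log \Delta\rceil + 1$-bit encoding of values in $\{-\Delta,\dots,\Delta\}\setminus\{0\}$.
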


\section{Parameter Selection and Privacy Proof}\label{sec:privacy_pf_and_params}

The focus of this section is on selecting concrete distributions to initiate the protocol and formalize its privacy guarantees, ultimately proving \Cref{thm:dsum-main}. First, in \Cref{sec:add-notations}, we introduce additional notation and reduce our task to proving a privacy guarantee for a protocol in the \emph{central} model. With these simplifications, we give a generic form of privacy guarantees in \Cref{sec:privacy-generic}.  \Cref{sec:nb} and \Cref{sec:right-inverse} are devoted to a more concrete selection of parameters. Finally, \Cref{thm:dsum-main} is proved in \Cref{sec:main-proof}.




\subsection{Additional Notation and Simplifications}
\label{sec:add-notations}


\paragraph{Matrix-Vector Notation.}
We use boldface letters to denote vectors and matrices, and standard letters to refer to their coordinates (e.g., if $\bu$ is a vector, then $u_i$ refers to its $i$th coordinate). For convenience, we allow general index sets for vectors and matrices; e.g., for an index set $\cI$, we write $\bu \in \R^{\cI}$ to denote the tuple $(u_i)_{i \in \cI}$. Operations such as addition, scalar-vector/matrix multiplication or matrix-vector multiplication are defined naturally.

For $i \in \cI$, we use $\bone_i$ to denote the $i$th vector in the standard basis; that is, its $i$-indexed coordinate is equal to  $1$ and each of the other coordinates is equal to $0$.  
Furthermore, we use~$\bzero$ to denote the all-zeros vector.

Let $[-\Delta, \Delta]$ denote $\{-\Delta, \dots, +\Delta\} \setminus \{ 0 \}$, 
and $\bv \in \Z^{[-\Delta, \Delta]}$ denote the vector $v_i := i$. Recall that a noise atom $\bs$ is a multiset of elements from $[-\Delta, \Delta]$. It is useful to also think of $\bs$ as a vector in  $\Z_{\geq 0}^{[-\Delta, \Delta]}$ where its $i$th entry denotes the number of times $i$ appears in $\bs$. We overload the notation and use $\bs$ to both represent the multiset and its corresponding vector. 

Let $\bA \in \Z^{[-\Delta, \Delta] \times \cS}$ denote the matrix whose rows are indexed by $[-\Delta, \Delta]$ and whose columns are indexed by $\cS$ where $A_{i, \bs} = s_i$. In other words, $\bA$ is a concatenation of column vectors $\bs$. 
Furthermore, let $[-\Delta, \Delta]_{-1}$ denote $[-\Delta, \Delta] \setminus \{1\}$, and $\tbA \in \Z^{[-\Delta, \Delta]_{-1} \times \cS}$ denote the matrix $\bA$ with row $1$ removed.

Next, we think of each input dataset $(x_1, \dots, x_n)$ as its histogram $\bh \in \Z_{\geq 0}^{[\Delta]}$ where $h_j$ denotes the number of $i \in [n]$ such that $x_i = j$. Under this notation, two input datasets $\bh, \bh'$ are neighbors iff $\|\bh - \bh'\|_{\infty} \leq 1$ and $\|\bh - \bh'\|_1 \leq 2$. For each histogram $\bh \in \Z_{\geq 0}^{[\Delta]}$, we write $\bh^{\pad} \in \Z_{\geq 0}^{[-\Delta, \Delta]}$ to denote the vector resulting from appending $\Delta$ zeros to the beginning of $\bh$; more formally, for every $i \in [-\Delta, \Delta]$, we let $h^{\pad}_i =  h_i$ if $i > 0$ and $h^{\pad}_i = 0$ if $i < 0$. 

\paragraph{An Equivalent Central DP Algorithm.}
A benefit of using infinitely divisible noise distributions is that they allow us to translate our protocols to equivalent ones in the \emph{central} model, where the total sum of the noise terms has a well-understood distribution. In particular, with the notation introduced above, \Cref{alg:delta_randomizer} corresponds to \Cref{alg:central-m} in the central model:

\begin{algorithm}[H]
\caption{\small Central Algorithm (Matrix-Vector Notation)} \label{alg:central-m}
\begin{algorithmic}[1]
\Procedure{\centralalgm($\bh$)}{}
\State Sample $z^{+1}, z^{-1} \sim \cDcentral$
\For{$\bs \in \cS$}
\SubState Sample $z^{\bs} \sim \cD^{\bs}$
\EndFor
\State $\bz \leftarrow (z^{\bs})_{\bs \in \cS}$
\State \Return $\bh^{\pad} + z^{+1} \cdot \bone_{1} + z^{-1} \cdot \bone_{-1} + \bA \bz$.
\EndProcedure
\end{algorithmic}
\end{algorithm}

\begin{observation} \label{obs:shuffle-to-central-matrix}
\randomizer\ is $(\eps, \delta)$-DP in the shuffle model if and only if \centralalgm\ is $(\eps, \delta)$-DP in the central model.
\end{observation}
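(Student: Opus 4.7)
The plan is to reduce the equivalence to showing that the distribution of messages arriving at the shuffler, viewed as a histogram, is identical to the distribution of the output of \centralalgm. Since the shuffler outputs a uniformly random permutation of the multiset of incoming messages, and the distribution of a uniformly random permutation is determined entirely by its underlying multiset (equivalently, its histogram), the post-processing property of DP tells us that the shuffler's output is $(\eps, \delta)$-DP if and only if the histogram of messages it receives is. So the task reduces to identifying the distribution of this histogram.

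First, I will compute the histogram contributed by a single user $i$: the user's message stream corresponds to the vector $x_i \cdot \bone_{x_i} \cdot \mathbf{1}[x_i \ne 0] + z_i^{+1} \bone_{1} + z_i^{-1} \bone_{-1} + \sum_{\bs \in \cS} z_i^{\bs} \cdot \bs$ in $\Z_{\geq 0}^{[-\Delta, \Delta]}$ (using that $\bs$, as a multiset, corresponds to a vector whose entry at $m$ is the multiplicity of $m$ in $\bs$, exactly matching the inner \textbf{for} loop on Line~\ref{line:for-loop-multiset} of \Cref{alg:delta_randomizer}). Summing over $i \in [n]$ and collecting the input terms into $\bh^{\pad}$, the total histogram received by the shuffler equals
\[
\bh^{\pad} \;+\; Z^{+1}\,\bone_{1} \;+\; Z^{-1}\,\bone_{-1} \;+\; \bA\,\bZ,
\]
where $Z^{+1} = \sum_i z_i^{+1}$, $Z^{-1} = \sum_i z_i^{-1}$, and $\bZ = (Z^{\bs})_{\bs \in \cS}$ with $Z^{\bs} = \sum_i z_i^{\bs}$.

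Next, I will invoke infinite divisibility: by definition of $\cDcentral_{/n}$ and $\cD^{\bs}_{/n}$, the independent sums $Z^{+1}, Z^{-1}$ are each distributed as $\cDcentral$, and each $Z^{\bs}$ is distributed as $\cD^{\bs}$, with all of these random variables mutually independent (since the per-user samples in \Cref{alg:delta_randomizer} are drawn independently). This matches line-for-line the distribution of $(z^{+1}, z^{-1}, \bz)$ sampled inside \centralalgm, and thus the histogram above has exactly the same distribution as the vector returned by \centralalgm($\bh$).

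Finally, because the output distribution of the shuffler (as a random multiset) and the output distribution of \centralalgm\ are identical as functions of the input histogram $\bh$, they agree on every event, so one is $(\eps, \delta)$-DP if and only if the other is. I do not anticipate a substantive obstacle; the only mild subtlety is being explicit that neighboring input tuples $(x_1, \dots, x_n)$ and $(x_1', \dots, x_n')$ give rise to neighboring histograms $\bh, \bh'$ in the sense used by \centralalgm, which follows directly from the definitions (a single-user change alters $\bh$ in at most two coordinates by at most one each).
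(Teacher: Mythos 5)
Your proposal is correct and follows essentially the same route as the paper: both reduce to observing that the shuffler's output carries exactly the same information as the histogram of messages (so DP transfers in both directions by post-processing), then use infinite divisibility to identify that histogram's distribution with the output of \centralalgm. One small slip in your per-user accounting: the input contribution should be $\bone_{x_i}\cdot\mathbf{1}[x_i\ne 0]$ rather than $x_i\cdot\bone_{x_i}\cdot\mathbf{1}[x_i\ne 0]$, since each nonzero user sends the single message $x_i$ and hence contributes a unit increment at position $x_i$; your subsequent aggregation into $\bh^{\pad}$ is already consistent with the corrected expression, so the argument goes through unchanged.
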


Given \Cref{obs:shuffle-to-central-matrix}, we can focus on proving the privacy guarantee of \centralalgm\ in the central model, which will be the majority of this section.

\paragraph{Noise Addition Mechanisms for Matrix-Based Linear Queries.}
%
%
The \emph{$\cD$-noise addition mechanism} for $\Delta$-summation (as defined in Section~\ref{sec:intro}) works by first computing the summation and adding to it a noise random variable sampled from $\cD$, where $\cD$ is a distribution over integers.
Note that under our vector notation above, the $\cD$-noise addition mechanism simply outputs $\left<\bv, \bh\right> + z$ where $z \sim \cD$.


It will be helpful to consider a generalization of the $\Delta$-summation problem, which allows $\bv$ above to be changed to any matrix (where the noise is now also a vector).

To define such a problem formally, let $\cI$ be any index set. Given a matrix $\bQ \in \Z^{\cI \times [\Delta]}$, the \emph{$\bQ$-linear query} problem\footnote{Similar definitions are widely used in literature; see e.g.~\cite{NikolovTZ13}. The main distinction of our definition is that we only consider integer-valued $\bQ$ and $\bh$.} is to compute, given an input histogram $\bh \in \Z^{[\Delta]}_{\geq 0}$, an estimate of $\bQ\bh \in \Z^\cI$. (Equivalently, one can think of each user as holding a column vector of $\bQ$ or the all-zeros vector $\bzero$, and the goal is to compute the sum of these vectors.) 

The noise addition algorithms for $\Delta$-summation can be easily generalized to the $\bQ$-linear query case: for a collection $\cbD = (\cD^i)_{i \in \cI}$ of distributions,
the \emph{$\cbD$-noise addition mechanism} samples\footnote{Specifically, sample $z_i \sim \cD^i$ independently for each $i \in \cI$.} $\bz \sim \cbD$ and then outputs $\bQ\bh + \bz$.

\subsection{Generic Privacy Guarantee}
\label{sec:privacy-generic}

With all the necessary notation ready, we can now state our main technical  theorem, which gives a privacy guarantee in terms of a right inverse of the matrix $\tbA$:

\begin{theorem} \label{thm:privacy-main-generic}
Let $\bC \in \Z^{\cS \times [-\Delta, \Delta]_{-1}}$ denote any right inverse of $\tbA$ (i.e., $\tbA\bC = \bI$) whose entries are integers. Suppose that the following holds:
\begin{itemize}[nosep]
\item The $\hcD$-noise addition mechanism is $(\eps_1, \delta_1)$-DP for $\Delta$-summation.
\item The $(\tcD^{\bs})_{\bs \in \cS}$-noise addition mechanism is $(\eps_2, \delta_2)$-DP for the $\left[\bzero~\bc_2~\cdots~\bc_\Delta\right]$-linear query problem, where $\bc_i$ denotes the $i$th column of $\bC$. 
\end{itemize}
Then, \centralalgm\ with the following parameter selections is $(\eps^* + \eps_1 + \eps_2, \delta_1 + \delta_2)$-DP for $\Delta$-summation:
\begin{itemize}[nosep]
\item $\cDcentral = \NB(1, e^{-\eps^* / \Delta})$.
\item $\cD^{\{-1, +1\}} = \tcD^{\{-1, +1\}} + \hcD$.
\item $\cD^{\bs} = \tcD^{\bs}$ for all $\bs \in \cS \setminus \{\{-1, +1\}\}$.
\end{itemize}
\end{theorem}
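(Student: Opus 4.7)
The plan is to establish privacy by applying basic composition to three sub-mechanisms with independent randomness, after a linear change of basis decouples the three noise sources. First, I use the divisibility hypothesis $\cD^{\{-1,+1\}} = \tcD^{\{-1,+1\}} + \hcD$ to split $z^{\{-1,+1\}} = \tilde{z}^{\{-1,+1\}} + \hat{z}$ with $\hat{z}\sim\hcD$ independent of $\tilde{\bz}$, obtaining
\[
\cM(\bh) = \bh^{\pad} + (z^{+1}+\hat{z})\bone_{+1} + (z^{-1}+\hat{z})\bone_{-1} + \bA\tilde{\bz}.
\]
Setting $W := z^{+1} - z^{-1}$, the choice $\cDcentral = \NB(1, e^{-\eps^*/\Delta})$ ensures $W\sim \DLap(\eps^*/\Delta)$. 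Since each atom is zero-sum (so $\bv^T\bA = \bzero$) and $v_{+1}+v_{-1}=0$, the total sum simplifies to $\langle\bv,\cM(\bh)\rangle = \langle\bv,\bh\rangle + W$, the usual $\eps^*$-DP discrete Laplace mechanism for $\Delta$-summation.

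Second, I apply the change of basis provided by $\tbA\bC = \bI$. Observe that $\bone_{-1} = \tbA\bone_{\{-1,+1\}}$, because the $\{-1,+1\}$-column of $\bA$ is $\bone_{+1}+\bone_{-1}$, which loses the $+1$-entry when passing to $\tbA$. Letting $\bQ := [\bzero \mid \bc_2 \mid \cdots \mid \bc_\Delta] \in \Z^{\cS \times [\Delta]}$, the identities $\tbA\bc_j = \bone_j$ for $j \geq 2$ (from $\tbA\bC = \bI$) and the vanishing of $\bQ$'s first column give $\tbA\bQ\bh = \tilde{\bh}^{\pad}$, where $\tilde{\bh}^{\pad}$ denotes $\bh^{\pad}$ restricted to $[-\Delta,\Delta]_{-1}$. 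Combined, these yield, for the restriction $\tilde{\cM}$ of $\cM$ to those indices,
\[
\tilde{\cM}(\bh) = \tbA\Bigl(\bQ\bh + \tilde{\bz} + (z^{-1}+\hat{z})\bone_{\{-1,+1\}}\Bigr),
\]
exhibiting $\tilde{\cM}(\bh)$ as a linear post-processing (via $\tbA$) of a vector that differs from $\bQ\bh+\tilde{\bz}$ only along the single coordinate $\bone_{\{-1,+1\}}$. The $+1$-coordinate of $\cM(\bh)$ is then recovered from $\tilde{\cM}(\bh)$ and the total sum.

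I then define three sub-mechanisms with disjoint private randomness: $\cM_1(\bh) := \langle\bv,\bh\rangle + W$, which is $\eps^*$-DP (discrete Laplace mechanism for $\Delta$-summation); $\cM_2(\bh) := \langle\bv,\bh\rangle + \hat{z}$, which is $(\eps_1,\delta_1)$-DP by the first hypothesis; and $\cM_3(\bh) := \bQ\bh + \tilde{\bz}$, which is $(\eps_2,\delta_2)$-DP by the second hypothesis. Since their randomness is independent, basic composition yields that the joint $(\cM_1,\cM_2,\cM_3)$ is $(\eps^*+\eps_1+\eps_2, \delta_1+\delta_2)$-DP.

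The main obstacle is deducing the same guarantee for $\cM$ itself. A direct computation from the preceding identities gives $\cM(\bh)_i = (\bA\cM_3(\bh))_i$ for all $i\notin\{\pm 1\}$, together with $\cM(\bh)_{-1} = (\bA\cM_3(\bh))_{-1} + T$ and $\cM(\bh)_{+1} = (\bA\cM_3(\bh))_{+1} + \cM_1(\bh) + T$, where $T := z^{-1}+\hat{z}$ is a common ``leaked'' noise at the $\pm 1$ coordinates. Reconstruction thus reduces to simulating $T$, which is correlated with the private randomness inside $\cM_1$ (via $z^{-1}$, coupled to $W = z^{+1}-z^{-1}$) and inside $\cM_2$ (via $\hat{z}$), so $\cM$ is not an obvious post-processing of $(\cM_1,\cM_2,\cM_3)$ alone. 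I expect to handle this step using \Cref{lem:correlated-coord}: exploiting the representation $(z^{+1},z^{-1}) \stackrel{d}{=} (U+\max(W,0),\, U+\max(-W,0))$ with $U\sim \NB(1,e^{-2\eps^*/\Delta})$ independent of $W$, the central noise decomposes cleanly into the $\DLap$ component already used by $\cM_1$ and an independent nuisance $U$ that, together with $(\cM_1,\cM_2)$, reproduces $T$. Combining this with the joint DP of $(\cM_1,\cM_2,\cM_3)$ and post-processing certifies $\cM = \centralalgm$ as $(\eps^*+\eps_1+\eps_2,\delta_1+\delta_2)$-DP; the shuffle-model conclusion for \randomizer\ then follows from \Cref{obs:shuffle-to-central-matrix}.
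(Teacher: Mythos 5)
Your change-of-basis identities are correct, and you correctly identify \Cref{lem:correlated-coord} as the crucial ingredient. But the step you flag as the ``main obstacle'' is where the argument actually breaks, and the $\min$/$\text{diff}$ decomposition you propose does not close the gap. Post-processing may only consume the \emph{outputs} of the private sub-mechanisms, not their internal randomness. From $\cM_1(\bh) = \langle\bv,\bh\rangle + W$ you learn only a shifted copy of $W$; likewise $\cM_2(\bh) = \langle\bv,\bh\rangle + \hat{z}$ reveals only a shifted copy of $\hat{z}$. The quantity you must reproduce, $T = z^{-1}+\hat{z} = U + \max(-W,0) + \hat{z}$, depends nonlinearly on $W$ through $\max(-W,0)$ and additively on $\hat{z}$; no function of $(\cM_1(\bh),\cM_2(\bh))$ and an independent $U$ can recover it, because the secret $\langle\bv,\bh\rangle$ cannot be subtracted off. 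Taking the difference $\cM_1(\bh)-\cM_2(\bh) = W-\hat{z}$ does cancel the secret, but the nonlinearity of $\max$ then defeats you: you cannot extract $\max(-W,0)$ and $\hat z$ separately from $W-\hat{z}$. So $\cM$ is genuinely not a post-processing of $(\cM_1,\cM_2,\cM_3)$ plus independent auxiliary noise, and the claim you need is false as stated.

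The paper sidesteps this by \emph{not} decomposing $\cM_{\cor}$ into two scalar mechanisms with independent noise. It treats $\cM_{\cor}(\bh) = \bigl(z^{+1}-z^{-1}+\langle\bv,\bh\rangle,\ z^{-1}+\tz^{\{-1,+1\}}\bigr)$ as a single two-coordinate mechanism, and \Cref{lem:correlated-coord} certifies this \emph{joint} mechanism at $(\eps^*+\eps_1,\delta_1)$ directly, absorbing the $z^{-1}$-correlation into the hockey-stick calculation instead of trying to decouple it. Once the second coordinate $b_{\{-1,+1\}}$ is part of the released output, $T$ is available as data: the reconstruction $\bw = \tbA(\by + b_{\{-1,+1\}}\bone_{\{-1,+1\}})$, $u_i=w_i$ for $i\neq 1$, $u_1 = b_{\mysum} - \sum_{i\neq 1} i\cdot w_i$ is an honest post-processing of $(\cM_{\cor},\cM_3)$, and basic composition of a $(\eps^*+\eps_1,\delta_1)$-DP mechanism with a $(\eps_2,\delta_2)$-DP mechanism gives the stated budget. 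To repair your proof, replace $(\cM_1,\cM_2)$ by the single mechanism $\cM_{\cor}$ and invoke \Cref{lem:correlated-coord} on it; the rest of your outline (the identity $\tbA\bone_{\{-1,+1\}}=\bone_{-1}$, $\tbA\bQ\bh=\tbh^{\pad}$, and recovering $u_1$ from the weighted sum) then goes through essentially as you wrote it.
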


The right inverse $\bC$ indeed represents the ``change of basis'' alluded to in the introduction. It will be specified in the next subsections along with the noise distributions $\hcD, (\tcD^{\bs})_{\bs \in \cS}$. 

As one might have noticed from \Cref{thm:privacy-main-generic}, $\{-1, +1\}$ is somewhat different that other noise atoms, as its noise distribution $\cD^{\{-1, +1\}}$ is the sum of $\hcD$ and $\tcD^{\{-1, +1\}}$. A high-level explanation for this is that our central noise is sent as $-1, +1$ messages and we would like to use the noise atom $\{-1, +1\}$ to ``flood out'' the correlations left in $-1, +1$ messages. A precise version of this statement is given below in \Cref{lem:correlated-coord}. We remark that the first output coordinate $z^{+1} - z^{-1} + \left<\bv, \bh\right>$ alone has exactly the same distribution as the $\DLap(\eps^*/\Delta)$-noise addition mechanism. The main challenge in this analysis is that the two coordinates are correlated through $z^{-1}$; indeed, this is where the $\tz^{-1, +1}$ random variable helps ``flood out'' the correlation.

\begin{lemma} \label{lem:correlated-coord}
Let $\cM_{\cor}$ be a mechanism that, on input histogram $\bh \in \Z_{\geq 0}^{[\Delta]}$, works as follows:
\begin{itemize}[nosep]
\item Sample $z^{+1}, z^{-1}$ independently from $\NB(1, e^{-\eps^* / \Delta})$.
\item Sample $\tz^{-1, +1}$ from $\hcD$.
\item Output $\left(z^{+1} - z^{-1} + \left<\bv, \bh\right>, z^{-1} + \tz^{-1, +1}\right)$.
\end{itemize}
If the $\hcD$-noise addition mechanism is $(\eps_1, \delta_1)$-DP for $\Delta$-summation, then $\cM_{\cor}$ is $(\eps^* + \eps_1, \delta_1)$-DP. 
\end{lemma}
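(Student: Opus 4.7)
The plan is to view $\cM_{\cor}$ as a two-stage adaptive mechanism and invoke adaptive composition. The crucial first step is a classical decomposition of the i.i.d.\ pair $(z^{+1}, z^{-1})$: writing $p := e^{-\eps^*/\Delta}$, set $L := z^{+1} - z^{-1}$ and $R' := \min(z^{+1}, z^{-1})$. A direct computation of the joint PMF yields $\Pr[L = \ell, R' = r] = (1-p)^2 p^{|\ell| + 2r}$, which factors to give $L \perp R'$, with $L \sim \DLap(\eps^*/\Delta)$ and $R'$ having law $\MR$ (geometric with ratio $p^2$). A short case split on the sign of $L$ then yields $z^{-1} = \max(0, -L) + R'$ almost surely, so, setting $Z := R' + \tz^{-1, +1}$ and $s := \langle \bv, \bh\rangle$, the output can be rewritten as
\[
\cM_{\cor}(\bh) \;=\; \bigl(L + s,\ \max(0, -L) + Z\bigr),
\]
with $L \sim \DLap(\eps^*/\Delta)$ and $Z \sim \MR + \hcD$ independent.

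Let $y_1 := L + s$ denote the first coordinate and note that $\max(0, -L) = \max(0, s - y_1)$. Hence $\cM_{\cor}$ is equivalent to the adaptive composition that first releases $y_1$ and then releases $y_2 := \max(0, s - y_1) + Z$. The first release is exactly the Discrete Laplace mechanism for the $\Delta$-summation query and so is $(\eps^*, 0)$-DP. For the second release, for each fixed $y_1$ the map $\bh \mapsto \max(0, s(\bh) - y_1)$ is $1$-Lipschitz in $s$ and therefore has integer sensitivity at most $\Delta$; by the hypothesis, the $\hcD$-noise addition mechanism is $(\eps_1, \delta_1)$-DP for any such sensitivity-$\Delta$ integer query, and adding independent $R'$-noise is post-processing, so the $(\MR + \hcD)$-noise addition mechanism inherits $(\eps_1, \delta_1)$-DP. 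Adaptive composition of $(\eps^*, 0)$ with $(\eps_1, \delta_1)$ then yields the claimed $(\eps^* + \eps_1, \delta_1)$-DP bound for $\cM_{\cor}$.

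The main obstacle I anticipate is the decomposition $L \perp R'$ of step one: it is the only place where the specific $\NB(1, p)$ form of the central noise is used in an essential way, and it is precisely what allows the unseen randomness $R'$ to \emph{absorb} the correlation that would otherwise couple the two output coordinates of $\cM_{\cor}$. Once this decomposition is in hand, the remainder is a direct application of sensitivity bounds, post-processing, and adaptive composition.
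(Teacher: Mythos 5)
Your proof is correct, and it takes a genuinely different and arguably cleaner route than the paper's own argument. The key structural insight you use---that for i.i.d.\ $\NB(1,p)$ random variables $z^{+1}, z^{-1}$, the difference $L = z^{+1}-z^{-1}$ and the minimum $R' = \min(z^{+1},z^{-1})$ are \emph{independent}, with $L \sim \DLap(\eps^*/\Delta)$ and $R'$ geometric with ratio $p^2$---lets you rewrite $\cM_{\cor}$ as an adaptive composition of a Discrete Laplace release (sensitivity $\Delta$, hence $(\eps^*,0)$-DP) followed by a sensitivity-$\Delta$ query masked by $\hcD + \MR$ noise. The hypothesis on $\hcD$, read through the hockey-stick characterization of Lemma~\ref{lem:noiseaddition-hockey-stick} (shift-invariance plus post-processing by the independent $\MR$ component), gives $(\eps_1,\delta_1)$-DP for the second stage, and adaptive composition finishes. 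The paper instead proves the same bound by expanding the hockey-stick divergence of the two-coordinate output directly, reindexing the inner sum by a shift $\lambda_c - \gamma_c$, and bounding the geometric ratio $\Pr[z^{+1}=c+d-(\lambda_c-\gamma_c)]\Pr[z^{-1}=d-(\lambda_c-\gamma_c)]/(\Pr[z^{+1}=c-\kappa+d]\Pr[z^{-1}=d]) \ge e^{-\eps^*}$; the shift bookkeeping in that computation is, in disguise, exactly your $\min/\mathrm{diff}$ decomposition. Your version isolates the structural reason the bound holds (the ``absorbing'' randomness $R'$ is independent of the released Laplace noise) and replaces the hand computation with standard DP building blocks, at the cost of relying on the specific geometric form of $\NB(1,p)$---a dependence the paper's calculation also has, just less visibly.
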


\Cref{lem:correlated-coord} is a \emph{direct improvement} over the main analysis of~\citet{GKMP20-icml}, whose proof (which works only for $\Delta = 1$) requires the $\hcD$-noise addition mechanism to be $(\eps_1, \delta_1)$-DP for $O(\log(1/\delta)/\eps)$-summation; we remove this $\log(1/\delta)/\eps$ factor. Our novel insight is that 
when conditioned on $z^{+1} + z^{-1} = c$, rather than conditioned on $z^{+1} + z^{-1} = c - \kappa$ for some $c, \kappa \in \Z$, the distributions of $z^{-1}$ are quite similar, up to a ``shift'' of $\kappa$ and a multiplicative factor of $e^{-\kappa \eps^* / \Delta}$. This allows us to ``match'' the two probability masses and achieve the improvement. The full proof of \Cref{lem:correlated-coord} 
is deferred to SM.



Let us now show how \Cref{lem:correlated-coord} can be used to prove \Cref{thm:privacy-main-generic}. At a high-level, we run the mechanism $\cM_{\cor}$ from \Cref{lem:correlated-coord} and the $(\tcD^{\bs})_{\bs \in \cS}$-noise addition mechanism, and argue that we can use their output to construct the output for $\centralalgm$. The intuition behind this is that the first coordinate of the output of $\cM_{\cor}$ gives the weighted sum of the desired output, and the second coordinate gives the number of $-1$ messages used to flood the central noise. As for the $(\tcD^{\bs})_{\bs \in \cS}$-noise addition mechanism, since $\bC$ is a right inverse of $\tbA$, we can use them to reconstruct the number of messages $-\Delta, \dots, -1, 2, \dots, \Delta$. The number of 1 messages can then be reconstructed from the weighted sum and all the numbers of other messages. These ideas are encapsulated in the proof below.

\begin{proof}[Proof of \Cref{thm:privacy-main-generic}]
Consider $\cM_{\mysim}$ defined as follows:
\begin{enumerate}[nosep]
\item First, run $\cM_{\cor}$ from \Cref{lem:correlated-coord} on input histogram $\bh$ to arrive at an output $(b_{\mysum}, b_{\{-1, +1\}})$
\item Second, run $(\tcD^{\bs})_{\bs \in \cS}$-mechanism for $\left[\bzero~\bc_2~\cdots~\bc_\Delta\right]$-linear query on $\bh$ to get an output $\by = (y_{\bs})_{\bs \in \cS}$.
\item Output $\bu = (u_{-\Delta}, \dots, u_{-1}, u_1, \dots, u_{\Delta})$ computed by letting 
$\bw = \tbA (\by + b_{\{-1, 1\}} \cdot \bone_{\{-1, +1\}})$ and then
\begin{align*}
u_i =
\begin{cases}
w_i &\text{ if } i \ne 1, \\
b_{\mysum} - \sum_{i \in [-\Delta, \Delta]_{-1}} i \cdot w_i &\text{ if } i = 1.
\end{cases}
\end{align*}
\end{enumerate}

From \Cref{lem:correlated-coord} and our assumption on $\hcD$, $\cM_{\cor}$ is $(\eps^* + \eps_1, \delta_1)$-DP. By assumption that the $(\tcD^{\bs})_{\bs \in \cS}$-noise addition mechanism is $(\eps_2, \delta_2)$-DP for $\left[\bzero~\bc_2~\cdots~\bc_\Delta\right]$-linear query and by the basic composition theorem, the first two steps of $\cM_{\mysim}$ are $(\eps^* + \eps_1 + \eps_2, \delta_1 + \delta_2)$-DP. The last step of $\cM_{\mysim}$ only uses the output from the first two steps; hence, by the post-processing property of DP, we can conclude that $\cM_{\mysim}$ is indeed $(\eps^* + \eps_1 + \eps_2, \delta_1 + \delta_2)$. 

Next, we claim that $\cM_{\mysim}(\bh)$ has the same distribution as $\centralalgm(\bh)$ with the specified parameters. To see this, recall from $\cM_{\cor}$ that we have $b_{\mysum} = \left<\bv, \bh\right> + z^{+1} - z^{-1}$, and $b_{\{-1, +1\}} = z^{-1} + \tz^{\{-1, +1\}}$, where $z^{+1}, z^{-1} \sim \NB(1, e^{-\eps^*/\Delta})$ and $\tz^{\{-1, +1\}} \sim \hcD$ are independent. Furthermore, from the definition of the $(\tcD^{\bs})_{\bs \in \cS}$-noise addition mechanism, we have
\begin{align*}
Y = \left[\bzero~\bc_2~\cdots~\bc_\Delta\right] \bh + \bbf
= \bC \tbh^{\pad} + \bbf,
\end{align*}
where $f_{\bs} \sim \tcD^{\bs}$ are independent, and $\tbh^{\pad}$ denotes $\bh^{\pad}$ after replacing its first coordinate with zero.

Notice that $\tbA \bone_{\{-1, +1\}} = \bone_{-1}$. Using this and our assumption that $\tbA\bC = \bI$, we get
\begin{align*}
\bw &= \tbA\left(\bC \tbh^{\pad} + \bbf + b_{\{-1, 1\}} \cdot \bone_{\{-1, +1\}}\right) \\
&= \tbh^{\pad} + z^{-1} \cdot \bone_{-1} + \tbA(\bbf + \tz^{\{-1, +1\}} \cdot \bone_{\{-1, +1\}}).
\end{align*}
Let $\bz = \bbf + \tZ^{\{-1, +1\}} \cdot \bone_{\{-1, +1\}}$; we can see that each entry $z_{\bs}$ is independently distributed as $\cD^{\bs}$. Finally, we have
\begin{align*}
u_1 
&= \left<\bv, \bh\right> + z^{+1} - z^{-1} - \sum_{i \in [-\Delta, \Delta]_{-1}} i \cdot w_i \\
&= \left(\sum_{i \in [\Delta]} i \cdot h_i\right) + z^{+1} - z^{-1} 
- \sum_{i \in [-\Delta, \Delta]_{-1}} i \cdot \left(\tbh^{\pad} + z^{-1} \cdot \bone_{-1} + \tbA\bz\right)_i \\
&= \left(\sum_{i \in [\Delta]} i \cdot h_i\right) + z^{+1} - z^{-1} - \left(\sum_{i \in \{2, \dots, \Delta\}} i \cdot h_i\right) 
+ z^{-1} - \left(\sum_{i \in [-\Delta, \Delta]_{-1}} i \cdot \left(\tbA\bz\right)_i\right) \\
&= h_1 + z^{+1} - \left(\sum_{i \in [-\Delta, \Delta]_{-1}} i \cdot \left(\tbA\bz\right)_i\right) \\
&= h_1 + z^{+1} - \left(\sum_{i \in [-\Delta, \Delta]_{-1}} i \cdot \left(\sum_{\bs \in \cS} s_i \cdot z_{\bs}\right)\right) \\
&= h_1 + z^{+1} - \left(\sum_{\bs \in \cS} z_{\bs} \cdot \left(\sum_{i \in [-\Delta, \Delta]_{-1}} i \cdot s_i \right)\right).
\end{align*}
Recall that $\sum_{i \in [-\Delta, \Delta]} i \cdot s_i = 0$ for all $\bs \in \cS$; equivalently, $\sum_{i \in [-\Delta, \Delta]_{-1}} i \cdot s_i = -s_1$. Thus, we have
\begin{align*}
u_1 = h_1 + z^{+1} + \left(\sum_{\bs \in \cS} z_{\bs} \cdot s_1\right) = h_1 + z^{+1} + (\bA\bz)_1.
\end{align*}
Hence, we can conclude that $\bu = \bh^{\pad} + z^{+1} \cdot \bone_{1} + z^{-1} \cdot \bone_{-1} + \bA \bz$; this implies that $\cM_{\mysim}(\bh)$ has the same distribution as the mechanism $\centralalgm(\bh)$.
\end{proof}

\subsection{Negative Binomial Mechanism}
\label{sec:nb}

Having established a generic privacy guarantee of our algorithm, we now have to specify the distributions $\hcD, \tcD^{\bs}$ that satisfy the conditions in \Cref{thm:privacy-main-generic} while keeping the number of messages sent for the noise small. Similar to~\citet{GKMP20-icml}, we use the negative binomial distribution. Its privacy guarantee is summarized below.\footnote{For the exact statement we use here, please refer to Theorem 13 of \citet{GKMP20-icml-arxiv}, which contains a correction of calculation errors in Theorem 13 of~\citet{GKMP20-icml}.}
\begin{theorem}[\citet{GKMP20-icml}] \label{thm:nb-scalar}
For any $\eps, \delta \in (0, 1)$ and $\Delta \in \N$, let $p = e^{-0.2\eps/\Delta}$ and $r = 3(1  + \log(1/\delta))$. The $\NB(r, p)$-additive noise mechanism is $(\eps, \delta)$-DP for $\Delta$-summation.
\end{theorem}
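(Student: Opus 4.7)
The plan is to directly bound the privacy loss random variable of the $\mathrm{NB}(r,p)$-additive noise mechanism, essentially following the approach of \citet{GKMP20-icml-arxiv}. Neighboring inputs to $\Delta$-summation produce output distributions that differ by an integer shift $k$ with $|k|\le\Delta$, so it suffices to compare $Z$ and $Z+k$ (both sides) for $Z\sim\mathrm{NB}(r,p)$ and $k\in\{1,\ldots,\Delta\}$.

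First I would write out the pmf ratio explicitly. Using $\Pr[Z=z]=\binom{z+r-1}{z}(1-p)^r p^z$,
\begin{align*}
\frac{\Pr[Z=z]}{\Pr[Z=z-k]}\;=\;p^k\prod_{i=0}^{k-1}\frac{z+r-1-i}{z-i},
\end{align*}
valid for $z\ge k$. One direction of DP is immediate: the reciprocal of this ratio is at most $p^{-k}\le e^{0.2k\epsilon/\Delta}\le e^{0.2\epsilon}\le e^{\epsilon}$ pointwise, so that side is $(\epsilon,0)$-DP with slack. The other direction is the real work: the displayed ratio blows up as $z$ decreases, and on the support gap $\{0,\ldots,k-1\}$ it is effectively infinite.

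The plan is then a truncation argument. Pick a threshold $z^\star=\Theta(\Delta r/\epsilon)$ large enough that the log-ratio is at most $\epsilon$ for all $z\ge z^\star$; this is a short calculation using $\log(1+x)\le x$ together with $k\le\Delta$ and the fact that $k\log p=-0.2k\epsilon/\Delta$ contributes a small negative drift. The residual mass on $\{0,1,\ldots,z^\star-1\}$ (which includes the support gap) must be absorbed into $\delta$. Since $\mathbb{E}[Z]=rp/(1-p)\asymp \Delta r/\epsilon$ has the same order as $z^\star$ but with a larger constant, $z^\star$ lies at a constant fraction of the mean, and a standard Chernoff bound for the negative binomial — obtained either from its representation as an independent sum of $\lceil r\rceil$ geometric variates, or directly from its moment generating function — yields $\Pr[Z<z^\star]\le \exp(-\Omega(r))$. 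Plugging in $r=3(1+\log(1/\delta))$ then gives $\exp(-\Omega(r))\le \delta$, as required.

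The main obstacle is the constant-level balancing among the three quantities: the drift $|k\log p|\le 0.2\epsilon$, the residual $\sum_{i=0}^{k-1}\log(1+(r-1-i)/(z-i))$ that must be at most $0.8\epsilon$ on the good event, and the Chernoff rate that must beat $\log(1/\delta)$. The constants $0.2$ in $p=e^{-0.2\epsilon/\Delta}$ and $3$ in $r=3(1+\log(1/\delta))$ are tuned precisely so that the pointwise log-ratio on the good event just clears $\epsilon$ while the tail probability just clears $\delta$; obtaining these rigorously (rather than in an $O(\cdot)$ sense) is the delicate part, which is why the cited reference had to issue a correction and why the final step in our exposition is to invoke \citet{GKMP20-icml-arxiv} directly rather than re-derive the constants here.
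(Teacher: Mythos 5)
The paper does not actually prove this theorem; it imports it verbatim from \citet{GKMP20-icml}, with a footnote pointing to the corrected Theorem~13 of \citet{GKMP20-icml-arxiv} for the precise statement. Your sketch correctly outlines the structure of that cited proof — reduce to comparing $\NB(r,p)$ against integer shifts of magnitude at most $\Delta$, handle one direction pointwise since the pmf ratio there is bounded by $p^{-k}\le e^{0.2\eps}$, handle the other direction via a threshold $z^\star=\Theta(\Delta r/\eps)$ below which the mass is absorbed into $\delta$ via a lower tail bound — and you are right to defer to the reference for the constant-level bookkeeping, which is exactly what the paper does. One small note: since $r=3(1+\log(1/\delta))$ is generally non-integer, the decomposition into $\lceil r\rceil$ i.i.d.\ geometrics is not exact, so the MGF route you mention in the alternative is the one to use for the tail bound.
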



We next extend \Cref{thm:nb-scalar} to the $\bQ$-linear query problem. To state the formal guarantees, 
we say that a vector $\bx \in \Z^\cI$ is \emph{dominated} by vector $\by \in \N^\cI$ iff $\sum_{i \in \cI} |x_i| / y_i \leq 1$. (We use the convention $0/0 = 0$ and $a/0 = \infty$ for all $a > 0$.)

\begin{corollary} \label{cor:nb}
Let $\eps, \delta \in (0, 1)$.
Suppose that every column of $\bQ \in \Z^{\cI \times [\Delta]}$ is dominated by $\bt \in \N^\cI$. For each $i \in \cI$, let $p_i = e^{-0.2\eps/(2\bt_i)}$, $r_i = 3(1 + \log(|\cI|/\delta))$ and $\cD^i = \NB(r_i, p_i)$. Then, $(\cD^i)_{i \in \cI}$-noise addition mechanism is $(\eps, \delta)$-DP for $\bQ$-linear query.
\end{corollary}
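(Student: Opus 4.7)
The plan is to decompose the $(\cD^i)_{i \in \cI}$-noise addition mechanism into $|\cI|$ independent scalar sub-mechanisms (one per output coordinate), bound the DP loss of each sub-mechanism on a \emph{specific} neighboring pair via \Cref{thm:nb-scalar}, and combine them through basic composition. The key trick is to let the per-coordinate privacy budget adapt to the realized shift, exploiting that the $\eps$-side of the negative binomial mechanism scales linearly in the sensitivity for a fixed success probability $p$.

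Fix any neighboring histograms $\bh, \bh' \in \Z_{\geq 0}^{[\Delta]}$; then $\bh - \bh'$ equals $\pm \bone_j$ or $\pm(\bone_j - \bone_{j'})$ for some $j, j' \in [\Delta]$. Let $\bd := \bQ(\bh - \bh') \in \Z^{\cI}$, so that the outputs $\bQ\bh + \bz$ on $\bh$ versus $\bh'$ differ in coordinate $i$ by the deterministic integer shift $d_i$. Applying the domination hypothesis to the (at most two) involved columns of $\bQ$ yields $\sum_{i \in \cI} |d_i|/t_i \leq 2$.

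For each coordinate $i$ with $d_i \neq 0$, I will invoke \Cref{thm:nb-scalar} under the substitutions $\Delta \leftarrow |d_i|$, $\eps \leftarrow \eps_i := \eps\,|d_i|/(2t_i)$, $\delta \leftarrow \delta_i := \delta/|\cI|$. Our choice $p_i = e^{-0.2\eps/(2t_i)}$ coincides exactly with $e^{-0.2\eps_i/|d_i|}$, and $r_i = 3(1+\log(|\cI|/\delta))$ coincides with $3(1+\log(1/\delta_i))$, so the hypotheses of \Cref{thm:nb-scalar} are satisfied. Hence the $i$-th sub-mechanism is $(\eps_i,\delta_i)$-DP on the pair $(\bh,\bh')$; coordinates with $d_i = 0$ yield identical output distributions on this pair and contribute zero privacy loss. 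Independence of the $z_i$'s then lets basic composition deliver the joint guarantee $\left(\sum_i \eps_i,\,\sum_i \delta_i\right)$, and
\[
\sum_i \eps_i \;=\; \eps \sum_i \frac{|d_i|}{2t_i} \;\leq\; \eps, \qquad \sum_i \delta_i \;\leq\; |\cI| \cdot \frac{\delta}{|\cI|} \;=\; \delta,
\]
using the domination bound. Since $(\bh,\bh')$ was arbitrary, the full mechanism is $(\eps,\delta)$-DP.

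The step I expect to be most delicate is the per-pair adaptive rescaling: a uniform application of \Cref{thm:nb-scalar} with $\Delta \leftarrow 2t_i$ would give only $(\eps,\delta/|\cI|)$-DP per coordinate, and naive composition would blow $\eps$ up by a factor of $|\cI|$. The domination inequality is exactly what lets the tighter per-pair budgeting close; one must also verify the admissibility conditions $\eps_i \in (0,1)$ and $\delta_i \in (0,1)$ required by \Cref{thm:nb-scalar}, which follow from $\eps \in (0,1)$, $|d_i| \leq 2t_i$, and $|\cI| \geq 1$, and that the convention $0/0 = 0$ correctly handles any coordinate $i$ with $t_i = 0$ (where domination forces $d_i = 0$ and that coordinate can be omitted from the accounting).
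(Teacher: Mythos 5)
Your proposal is correct and follows essentially the same route as the paper: fix a neighboring pair, note that $\bQ(\bh-\bh')$ is dominated by $2\bt$, allocate a per-coordinate budget $\eps_i = \eps|d_i|/(2t_i)$ and $\delta_i = \delta/|\cI|$, invoke \Cref{thm:nb-scalar} coordinatewise (the paper leaves the substitution $\Delta \leftarrow |d_i|$ implicit where you spell it out), and combine via product composition, which the paper formalizes as \Cref{lem:basic-composition-hockeystick}. The only cosmetic difference is that the paper phrases the final step directly as a bound on hockey-stick divergences of product distributions rather than as per-pair ``$(\eps_i,\delta_i)$-DP'' sub-mechanisms, which is a cleaner way to express the same adaptive budgeting you describe.
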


\subsection{Finding a Right Inverse}
\label{sec:right-inverse}

A final step before we can apply \Cref{thm:privacy-main-generic} is to specify the noise atom collection $\cS$ and the right inverse $\bC$ of $\tbA$. Below we give such a right inverse where every column is dominated by a vector $\bt$ that is ``small''. This allows us to then use the negative binomial mechanism in the previous section with a ``small'' amount of noise. How ``small'' $\bt$ is depends on the expected number of messages sent; this is governed by $\|\bt\|_{\cS} := \sum_{\bs \in \cS} \|\bs\|_1 \cdot |\bt_{\bs}|$. With this notation, the guarantee of our right inverse can be stated as follows. 
(We note that in our noise selection below, every $\bs \in \cS$ has at most three elements. In other words, $\|\bt\|_{\cS}$ and $\|\bt\|_1$ will be within a factor of three of each other.)

\begin{theorem} \label{thm:right-inv}
There exist $\cS$ of size $O(\Delta)$, $\bC \in \Z^{\cS \times [-\Delta, \Delta]_{-1}}$ with $\tbA\bC = \bI$ and $\bt \in \N^\cS$ such that $\|\bt\|_{\cS} \leq O(\Delta \log^2 \Delta)$ and every column of $\bC$ is dominated by $\bt$.
\end{theorem}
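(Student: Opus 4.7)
The plan is to construct $\cS$, the right inverse $\bC$, and the dominating vector $\bt$ explicitly via a binary-doubling decomposition, mirroring the change-of-basis philosophy from the technical overview. Let $L = \lfloor\log_2\Delta\rfloor$. I will take $\cS$ to consist of: the pair atom $\{+1,-1\}$; the ``doubling atoms'' $U_i := \{-2^i,-2^i,+2^{i+1}\}$ and $V_i := \{+2^i,+2^i,-2^{i+1}\}$ for each $0 \leq i < L$; and, for each $k \in \{3,\ldots,\Delta\}$ that is not a power of $2$ (writing $k = 2^{\lfloor\log_2 k\rfloor} + k'$ with $k' \in \{1,\ldots,2^{\lfloor\log_2 k\rfloor}-1\}$), the ``splitting atoms'' $B_k := \{-2^{\lfloor\log_2 k\rfloor},-k',+k\}$ and $B'_k := \{+2^{\lfloor\log_2 k\rfloor},+k',-k\}$. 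Each atom has at most three elements and zero sum, and $|\cS| = O(\Delta)$.

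I will define the columns $\bc_j$ of $\bC$ recursively in increasing order of $|j|$. The base columns are $\bone_{-1}$ (the restriction of $\{+1,-1\}$, with coefficient $1$) and $\bone_{-2}$ (the restriction of $V_0 = \{+1,+1,-2\}$, coefficient $1$, since both $+1$ entries are dropped by the restriction). For each power of $2$ of the form $k = 2^{i+1}$ with $i \geq 0$, set $\bone_k = U_i - 2\bone_{-2^i}$ and $\bone_{-k} = V_i - 2\bone_{2^i}$, unfolding into previously-defined columns. For each $k \in \{3,\ldots,\Delta\}$ not a power of $2$ with $k = 2^i + k'$, set $\bone_k = B_k - \bone_{-2^i} - \bone_{-k'}$ and $\bone_{-k} = B'_k - \bone_{2^i} - \bone_{k'}$, dropping the $\bone_{k'}$ term from the second expression when $k' = 1$ (since then $+1$ is excluded from $B'_k$'s restriction). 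A straightforward induction verifies $\tbA\bc_j = \bone_j$ throughout, so $\tbA\bC = \bI$.

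The choice is $t_\bs := \lceil c_0\log\Delta\rceil \cdot \max(M_\bs, 1)$, where $M_\bs := \max_j |C_{\bs,j}|$ and $c_0$ is a large enough constant. Since the recursion tree for $\bone_{\pm k}$ has depth $O(\log\Delta)$ and distinct atoms overlap substantially across branches (each level $\ell$ contributes at most the two atoms $U_\ell,V_\ell$), each column of $\bC$ contains at most $O(\log\Delta)$ nonzero entries, which yields the domination inequality $\sum_\bs |C_{\bs,j}|/t_\bs \leq 1$ for every $j$. The main inductive bounds I expect to verify are: $M_{\{+1,-1\}}, M_{V_0} \leq \Delta$ (the coefficient of the base atom in $\bone_{\pm 2^i}$ is exactly $\pm 2^i$, and in $\bone_{\pm k}$ it telescopes to at most $\sum_j 2^{i_j} = k \leq \Delta$ when $k = 2^{i_1} + \cdots + 2^{i_m}$); $M_{U_i}, M_{V_i} \leq O(\Delta/2^i)$; and $M_{B_k} = M_{B'_k} = 1$. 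Summing $\|\bs\|_1 \cdot t_\bs$ over $\bs$: the two base atoms contribute $O(\Delta\log\Delta)$; the $2L$ doubling atoms contribute $\sum_{i=0}^{L-1} O(\log\Delta \cdot \Delta/2^i) = O(\Delta\log\Delta)$; and the $O(\Delta)$ splitting atoms contribute $O(\Delta) \cdot O(\log\Delta) = O(\Delta\log\Delta)$. Hence $\|\bt\|_\cS = O(\Delta\log\Delta)$, which lies within the required $O(\Delta\log^2\Delta)$ bound.

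The main obstacle is the careful inductive verification of the coefficient bounds $M_\bs$ together with the per-column atom count of $O(\log\Delta)$. The doubling recursion $\bone_{2^{i+1}} = U_i - 2\bone_{-2^i}$ compounds coefficients by a factor of $2$ at each level, so low-level atoms like $\{+1,-1\}$ inevitably pick up a coefficient of order $2^L = \Theta(\Delta)$ in $\bone_{\pm 2^L}$; ensuring that this does not incur a further $\log\Delta$ factor when aggregating across the bits of a general $k$ relies on the telescoping identity $\sum_j 2^{i_j} = k$. Handling the special cases $k' = 1$ (where $+1$ is dropped from the restriction of $B'_k$) and $k = 2$ (where both $+1$ entries of $V_0$ are dropped) in the recursion also requires some bookkeeping, but none of it changes the asymptotic accounting.
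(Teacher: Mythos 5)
Your construction is correct and reaches the stated conclusion, but it takes a genuinely different route from the paper and in fact gives a tighter bound. The paper uses a balanced-halving recursion: $\cS$ consists of $\{-1,+1\}$ and all atoms $\{i,-\lfloor i/2\rfloor,-\lceil i/2\rceil\}$, with $\bc_i = \bone_{\{i,-\lfloor i/2\rfloor,-\lceil i/2\rceil\}} - \bc_{-\lfloor i/2\rfloor} - \bc_{-\lceil i/2\rceil}$; it then takes $t_\bs \approx \Gamma/m$ (where $m$ is the largest element of $\bs$ and $\Gamma = \Theta(\Delta\log\Delta)$) and proves by induction on $|i|$ that $\sum_\bs |C_{\bs,i}|/t_\bs \leq |i|\lceil 1+\log|i|\rceil/\Gamma \leq 1$, giving $\|\bt\|_\cS = \Theta(\Gamma\log\Delta) = O(\Delta\log^2\Delta)$. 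You instead use a binary (powers-of-two) decomposition with a small set of reusable doubling atoms $U_i,V_i$ plus one splitting atom per non-power-of-two $k$; you choose $\bt$ proportional to $\log\Delta$ times the max absolute row entry of $\bC$, and establish domination by the global observation that each column has only $O(\log\Delta)$ nonzero entries. Both constructions give $|\cS| = O(\Delta)$, but the telescoping $\sum_j 2^{i_j} = k$ and the $1/2^i$ decay of the coefficients of $U_i,V_i$ let you sum $\|\bt\|_1$ to $O(\Delta\log\Delta)$, saving a $\log\Delta$ factor over the paper's bound. This propagates to a $\log\Delta$ improvement in the expected number of extra messages in \Cref{thm:dsum-main}, so your approach is not just an alternative but a quantitative refinement; the bookkeeping you flag (signs in the unrolling, the $k'=1$ and $k=2$ edge cases, and the distinct-atom count) is routine and does not threaten the argument.
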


The full proof of \Cref{thm:right-inv} is deferred to SM. The main idea is to essentially proceed via Gaussian elimination on $\tbA$. However, we have to be careful about our choice of orders of rows/columns to run the elimination on, as otherwise it might produce a non-integer matrix $\bC$ or one whose columns are not ``small''. In our proof, we order the rows based on their absolute values, and we set $\cS$ to be the collection of $\{-1, +1\}$ and $\{i, -\lceil i/2 \rceil, -\lfloor i/2 \rfloor\}$ for all $i \in \{-\Delta, \dots, -2, 2, \dots, \Delta\}$. In other words, these are the noise atoms we send in our protocol.

\begin{figure*}[h]
\centering
\begin{subfigure}[b]{0.32\textwidth}
\includegraphics[width=\textwidth]{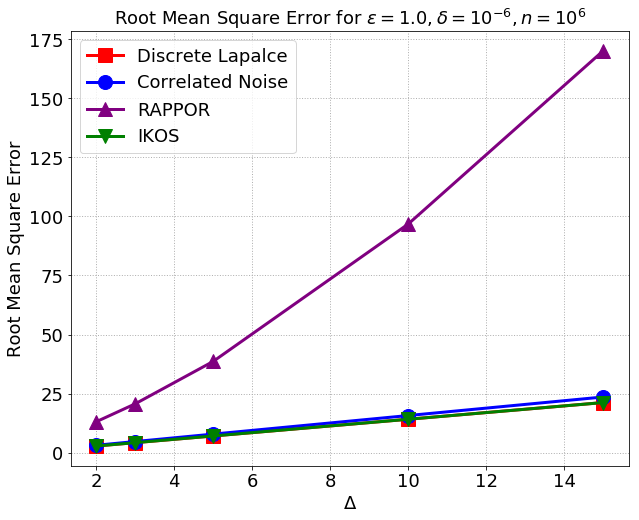}
\caption{RMSEs for varying $\Delta$}
\label{subfig:error}
\end{subfigure}
\begin{subfigure}[b]{0.32\textwidth}
\includegraphics[width=\textwidth]{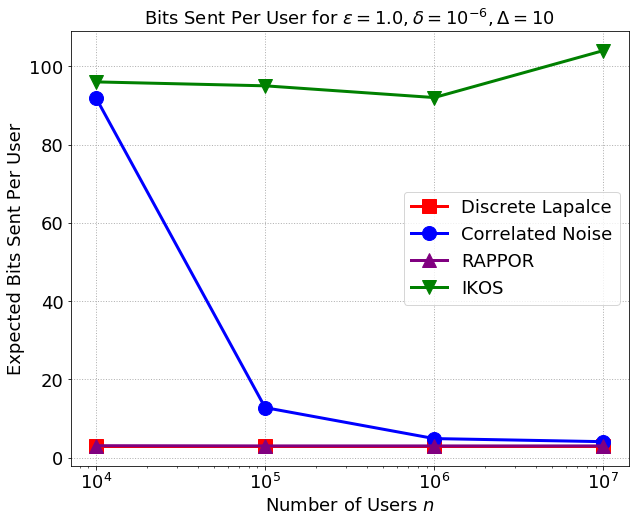}
\caption{Expected bits sent for varying $n$}
\label{subfig:comm-varying-n}
\end{subfigure}
\begin{subfigure}[b]{0.32\textwidth}
\includegraphics[width=\textwidth]{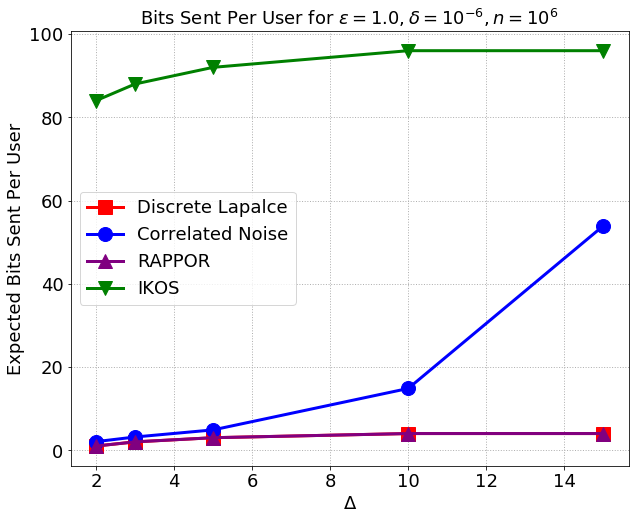}
\caption{Expected bits sent for varying $\Delta$}
\label{subfig:comm-varying-Delta}
\end{subfigure}
\caption{Error and communication complexity of our ``correlated noise'' $\Delta$-summation protocol compared to other protocols. 
}
\end{figure*}

\subsection{Specific Parameter Selection: Proof of \Cref{thm:dsum-main}}
\label{sec:main-proof}

\begin{proof}[Proof of \Cref{thm:dsum-main}]
Let $\cS, \bC, \bt$ be as in Theorem~\ref{thm:right-inv}, $\eps^* = (1 - \gamma)\eps, \eps_1 = \eps_2 = \min\{1, \gamma\eps\}/2, \delta_1 = \delta_2 = \delta/2$, and,
\begin{itemize}[nosep]
\item $\cDcentral = \NB(1, e^{-\eps^*/\Delta})$,
\item $\hcD = \NB(\hat{r}, \hat{p})$ where $\hat{p} = e^{-0.2\eps_1/\Delta}$ and $\hat{r} = 3(1  + \log(1/\delta_1))$,
\item $\tcD^{\bs} = \NB(r_{\bs}, p_{\bs})$ where $p_{\bs} = e^{-0.2\eps_2/(2 t_{\bs})}$ and $r^{\bs} = 3(1 + \log(|\cS|/\delta_2))$ for all $\bs \in \cS$.
\end{itemize}
From \Cref{thm:nb-scalar}, the $\hcD$-noise addition mechanism is $(\eps_1, \delta_1)$-DP for $\Delta$-summation. \Cref{cor:nb} implies that the $(\tcD^{\bs})_{\bs \in \cS}$-noise addition  mechanism is $(\eps_2, \delta_2)$-DP for $[\bzero~\bc_2~\cdots~\bc_\Delta]$-linear query. As a result, since $\eps^* + \eps_1 + \eps_2 \leq \eps$ and $\delta_1 + \delta_2 = \delta$, \Cref{thm:privacy-main-generic} ensures that \centralalgm\ with parameters as specified in the theorem is $(\eps, \delta)$-DP.

The error claim follows immediately from \Cref{obs:error} together with the fact that $\NB(1, e^{-\eps^*/\Delta}) - \NB(1, e^{-\eps^*}/\Delta) = \DLap(\eps^*/\Delta)$. Using \Cref{obs:comm}, we can also bound the expected number of messages as
\begin{align*}
&1 + \frac{1}{n}\left(2\E[\cDcentral] + \sum_{\bs \in \cS} |\bs| \cdot \E[\cD^{\bs}]\right) \\
&\leq 1 + O\left(\frac{\Delta}{\eps^* n}\right) + \frac{1}{n} \left(\frac{2\hat{r}}{1 - \hat{p}} + \sum_{\bs \in \cS} \frac{|\bs| r_{\bs}}{1 - p_{\bs}}\right) \\
&= 1 + O\left(\frac{\Delta \log(1/\delta)}{\gamma \eps n}\right) + O\left(\frac{\|\bt\|_{\cS} \cdot \log(\Delta/\delta)}{\eps_2 n}\right) \\
&\leq 1 + O\left(\frac{\Delta \log(1/\delta)}{\gamma \eps n}\right) + O\left(\frac{\Delta \log^2 \Delta \cdot \log(\Delta/\delta)}{\gamma \eps n}\right),
\end{align*}
where each message consists of $\lceil \log \Delta \rceil + 1$ bits.
\end{proof}

\section{Experimental Evaluation}\label{sec:experiments}

We compare our ``correlated noise'' $\Delta$-summation protocol (Algorithms~\ref{alg:delta_randomizer},~\ref{alg:delta_analyzer}) against known algorithms in the literature, namely the IKOS ``split-and-mix'' protocol~\cite{ishai2006cryptography,balle_merged,ghazi2019private} and the fragmented\footnote{The RAPPOR randomizer starts with a one-hot encoding of the input (which is a $\Delta$-bit string) and flips each bit with a certain probability. \emph{Fragmentation} means that, instead of sending the entire $\Delta$-bit string, the randomizer only sends the coordinates that are set to 1, each as a separate $\lceil \log \Delta \rceil$-bit message. This is known to reduce both the communication and the error in the shuffle model~\cite{CheuSUZZ19,esa-revisited}.} version of RAPPOR~\cite{erlingsson2014rappor,esa-revisited}. We also include the Discrete Laplace mechanism~\cite{GhoshRS12} in our plots for comparison, although it is not directly implementable in the shuffle model. We do not include the generalized (i.e., $\Delta$-ary) Randomized Response algorithm~\cite{warner1965randomized} as it always incurs at least as large error as RAPPOR.

For our protocol, we set $\eps^*$ in \Cref{thm:privacy-main-generic} to $0.9 \eps$, meaning that its MSE is that of $\DLap(0.9 \eps/\Delta)$. While the parameters set in our proofs give a theoretically vanishing overhead guarantee, they turn out to be rather impractical; instead, we resort to a tighter numerical approach to find the parameters. We discuss this, together with the setting of parameters for the other alternatives, in the SM.

For all mechanisms the root mean square error (RMSE) and the (expected) communication per user only depends on $n, \eps, \delta, \Delta$, and is independent of the input data. We next summarize our findings.

\paragraph{Error.}
The IKOS algorithm has the same error as the Discrete Laplace mechanism (in the central model), whereas our algorithm's error is slightly larger due to $\eps^*$ being slightly smaller than $\eps$.
On the other hand, the RMSE for RAPPOR grows as $1/\delta$ increases, but it seems to converge as $n$ becomes larger. (For $\delta = 10^{-6}$ and $n = 10^4, \dots, 10^7$, we found that RMSEs differ by less than 1\%.)

However, the key takeaway here is that the RMSEs of IKOS, Discrete Laplace, and our algorithm grow only linearly in $\Delta$, but the RMSE of RAPPOR is proportional to $\Theta(\Delta^{3/2})$. This is illustrated in \Cref{subfig:error}.

\paragraph{Communication.}
While the IKOS protocol achieves the same accuracy as the Discrete Laplace mechanism, it incurs a large communication overhead, as each message sent consists of $\lceil \log (n\Delta) \rceil$ bits and each user needs to send multiple messages. By contrast, when fixing $\Delta$ and taking $n \to \infty$, both RAPPOR and our algorithm only send $1 + o(1)$ messages, each of length $\lceil \log \Delta \rceil$ and $\lceil \log \Delta \rceil +  1$ respectively. This is illustrated in \Cref{subfig:comm-varying-n}. Note that the number of bits sent for IKOS is indeed not a monotone function since, as $n$ increases, the number of messages required decreases but the length of each message increases.

Finally, we demonstrate the effect of varying $\Delta$ in \Cref{subfig:comm-varying-Delta} for a fixed value of $n$. Although \Cref{thm:dsum-main} suggests that the communication overhead should grow roughly linearly with $\Delta$, we have observed larger gaps in the experiments. This seems to stem from the fact that, while the $\tilde{O}(\Delta)$ growth would have been observed if we were using our analytic formula, our tighter parameter computation (detailed in Appendix~\ref{subsec:noise-parameter-search} of SM) finds a protocol with even \emph{smaller} communication, suggesting that the actual growth might be less than $\tilde{O}(\Delta)$ though we do not know of a formal proof of this. Unfortunately, for large $\Delta$, the optimization problem for the parameter search gets too demanding and our program does not find a good solution, leading to the ``larger gap'' observed in the experiments.

\section{Conclusions}\label{sec:conclusion}
In this work, we presented a DP protocol for real and $1$-sparse vector aggregation in the shuffle model with accuracy arbitrarily close to the best possible central accuracy, 
and with relative communication overhead tending to~$0$ with increasing number of users. It would be very interesting to generalize our protocol and obtain qualitatively similar guarantees for \emph{dense} vector summation.

We also point out that in the low privacy regime ($\eps \gg 1$), the \emph{staircase} mechanism is known to significantly improve upon the Laplace mechanism~\cite{GengV16,GengKOV15}; the former achieves MSE 
that is exponentially small in $\Theta(\eps)$
while the latter has MSE $O(1/\eps^2)$. An interesting open question is to achieve such a gain in the shuffle model. 

\balance
\bibliographystyle{icml2021}
\bibliography{refs}

\newpage
\onecolumn

\appendix


\section*{\centerline{\huge Supplementary Material}}

\section{Additional Preliminaries}

We start by introducing a few additional notation and lemmas that will be used in our proofs.

\begin{definition}
The \emph{$\eps$-hockey stick divergence} of two (discrete) distributions $\cD, \cD'$ is defined as
\begin{align*}
d_{\eps}(\cD \| \cD') := \sum_{v \in \supp(\cD)} \left[\Pr_{x \sim \cD}[x = v] - e^{\eps} \cdot \Pr_{x \sim \cD'}[x = v]\right]_+ 
\end{align*}
where $[y]_+ := \max\{y, 0\}$.
\end{definition}

The following is a (well-known) simple restatement of DP in terms of hockey stick divergence:

\begin{lemma} \label{lem:dp-hockeystick}
An algorithm $\mathcal{A}$ is $(\epsilon, \delta)$-DP iff for any neighboring datasets $X$ and $X'$, it holds that $d_{\eps}(\cA(X), \cA(X')) \leq \delta$.
\end{lemma}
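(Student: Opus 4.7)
The plan is to prove both directions of the \emph{iff} directly from \Cref{def:dp_general} and the definition of $d_\eps$ just stated. For the ``only if'' direction, assume $\cA$ is $(\eps,\delta)$-DP and fix neighboring datasets $X, X'$. I would take the witness event $S^* := \{v : \Pr[\cA(X) = v] > e^\eps \Pr[\cA(X') = v]\}$. By construction of the positive part $[\cdot]_+$, each summand in the definition of $d_\eps(\cA(X)\|\cA(X'))$ vanishes outside $S^*$ and coincides with $\Pr[\cA(X)=v] - e^\eps \Pr[\cA(X') = v]$ inside $S^*$. Hence
\[
d_\eps(\cA(X)\|\cA(X')) = \Pr[\cA(X) \in S^*] - e^\eps \Pr[\cA(X') \in S^*],
\]
which is at most $\delta$ by applying the DP guarantee to the event $S = S^*$.

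For the ``if'' direction, assume $d_\eps(\cA(X)\|\cA(X')) \le \delta$ for every pair of neighbors and fix an arbitrary subset $S$ of outputs. I would write $\Pr[\cA(X) \in S] - e^\eps \Pr[\cA(X') \in S]$ as $\sum_{v \in S}\bigl(\Pr[\cA(X)=v] - e^\eps\Pr[\cA(X')=v]\bigr)$, upper bound each summand by its positive part (using $y \le [y]_+$), and then extend the sum from $v \in S$ to all $v \in \supp(\cA(X))$. The resulting upper bound is exactly $d_\eps(\cA(X)\|\cA(X'))$, hence at most $\delta$ by hypothesis. Rearranging gives $\Pr[\cA(X) \in S] \le e^\eps \Pr[\cA(X') \in S] + \delta$, which is precisely $(\eps,\delta)$-DP.

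There is no real obstacle here; this is a textbook-level manipulation. The only point worth highlighting is the choice of the witness event $S^*$ in the forward direction, which is tight by design so that a single application of the DP hypothesis to this one event recovers the full hockey stick sum. The discreteness of $\cA$ baked into the definition of $d_\eps$ ensures that no measure-theoretic subtleties (e.g., replacing sums by integrals and worrying about measurability of $S^*$) arise.
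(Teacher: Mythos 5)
The paper states \Cref{lem:dp-hockeystick} without proof, labeling it a well-known restatement, so there is no in-text argument to compare against. Your two-direction argument is correct and is the standard proof: the forward direction picks the tight witness set $S^* = \{v : \Pr[\cA(X)=v] > e^\eps \Pr[\cA(X')=v]\}$ (automatically contained in $\supp(\cA(X))$, so the sum in the definition of $d_\eps$ restricted to $S^*$ is exactly $\Pr[\cA(X)\in S^*]-e^\eps\Pr[\cA(X')\in S^*]$), and the reverse direction uses $y \le [y]_+$ together with the observation that terms with $v\notin\supp(\cA(X))$ contribute a nonpositive value and hence a zero positive part, so the sum over $S$ is dominated by the full hockey-stick sum. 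No gaps.
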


We will also use the following result in~\citep[Lemma 21]{GKMP20-icml}, which allows us to easily compute differential privacy parameters noise addition algorithms for $\Delta$-summation.

\begin{lemma}[\citet{GKMP20-icml}] \label{lem:noiseaddition-hockey-stick}
An $\cD$-noise addition mechanism for $\Delta$-summation is $(\eps, \delta)$-DP iff $d_{\eps}(\cD \| k + \cD) \leq \delta$ for all $k \in \{-\Delta, \dots, +\Delta\}$.
\end{lemma}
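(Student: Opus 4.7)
The plan is to reduce the DP condition, which quantifies over all pairs of neighboring datasets, to the simpler pairwise condition on the noise distribution $\cD$ stated in the lemma. The key observation is that the output of the $\cD$-noise addition mechanism on an input dataset $X = (x_1, \dots, x_n)$ depends on $X$ only through the single integer $T(X) := \sum_i x_i$: by definition, the output is distributed as $T(X) + \cD$.

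First, I would apply \Cref{lem:dp-hockeystick} to rewrite $(\eps,\delta)$-DP as the requirement that $d_\eps(\cA(X) \,\|\, \cA(X')) \le \delta$ for every pair of neighboring datasets $X, X'$. Since the definition of neighboring requires both orderings (swapping the roles of $X$ and $X'$), I will need $d_\eps(\cA(X) \,\|\, \cA(X')) \le \delta$ and $d_\eps(\cA(X') \,\|\, \cA(X)) \le \delta$. Next, I would observe that if $X, X'$ differ only in coordinate $j$, then $k := T(X') - T(X) = x'_j - x_j$ is an integer in $\{-\Delta, \dots, +\Delta\}$, and conversely every such $k$ is realized by some neighboring pair (e.g., take $x_j = \max(0,-k)$ and $x'_j = \max(0,-k) + k$, both lying in $\{0, \dots, \Delta\}$). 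Hence the set of divergences that arise from neighboring pairs is exactly $\{d_\eps(T(X) + \cD \,\|\, T(X) + k + \cD) : k \in \{-\Delta,\dots,+\Delta\},\ X \text{ arbitrary}\}$, together with its symmetric counterpart.

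Then I would invoke translation invariance of the hockey stick divergence: for any integer $c$, the transformation $v \mapsto v + c$ is a bijection on the support and preserves the pointwise ratio of probability masses, so $d_\eps(c + \cD_1 \,\|\, c + \cD_2) = d_\eps(\cD_1 \,\|\, \cD_2)$ for any distributions $\cD_1, \cD_2$. Applying this with $c = -T(X)$ gives $d_\eps(\cA(X) \,\|\, \cA(X')) = d_\eps(\cD \,\|\, k + \cD)$, and analogously $d_\eps(\cA(X') \,\|\, \cA(X)) = d_\eps(\cD \,\|\, -k + \cD)$. Since $k$ ranges over the symmetric set $\{-\Delta, \dots, +\Delta\}$, the conditions from the two orderings coincide, and the mechanism is $(\eps,\delta)$-DP iff $d_\eps(\cD \,\|\, k + \cD) \le \delta$ for every $k \in \{-\Delta, \dots, +\Delta\}$, which is the claimed equivalence.

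There is no real obstacle here — the only minor points to be careful with are (i) verifying that every shift $k \in \{-\Delta, \dots, +\Delta\}$ is actually achievable by a pair of neighboring datasets (so the ``only if'' direction has enough witnesses), and (ii) handling both orderings of the DP inequality, which is automatic because the shift set is symmetric about $0$.
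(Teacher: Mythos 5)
Your proof is correct, and it is essentially the standard argument one would expect for this fact. The paper itself does not prove \Cref{lem:noiseaddition-hockey-stick}; it imports it from \citet{GKMP20-icml} (Lemma 21 there), so there is no in-paper proof to compare against. Your reduction — pass to hockey-stick divergence via \Cref{lem:dp-hockeystick}, observe that the output depends on the input only through the integer $T(X)=\sum_i x_i$, use shift invariance of $d_\eps$ to cancel the common offset $T(X)$, and note that $T(X')-T(X)$ ranges over exactly $\{-\Delta,\dots,+\Delta\}$ as $(X,X')$ ranges over neighboring pairs (with your explicit witness $x_j=\max(0,-k)$, $x'_j=x_j+k$ for the ``only if'' direction, and the symmetry of the shift set handling both orderings of the DP inequality) — is exactly the argument needed and has no gaps.
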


Finally, we will also use the following lemma, which can be viewed as an alternative formulation of the basic composition theorem.

\begin{lemma} \label{lem:basic-composition-hockeystick}
Let $\cD_1, \dots, \cD_k, \cD'_1, \dots, \cD'_k$ be any distributions and $\eps_1, \dots, \eps_k$ any non-negative real numbers. Then, we have $d_{\eps_1 + \cdots + \eps_k}(\cD_1 \times \cdots \times \cD_k \| \cD'_1 \times \cdots \times \cD'_k) \leq \sum_{i \in [k]} d_{\eps_i}(\cD_i \| \cD'_i)$.
\end{lemma}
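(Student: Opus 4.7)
This is the hockey-stick form of the basic composition theorem of differential privacy, and I would prove it by decomposing each pair $(\cD_i,\cD'_i)$ into a pointwise-dominated piece plus an ``excess mass'' of weight $d_{\eps_i}(\cD_i\|\cD'_i)$. Multiplicativity of the dominated pieces then absorbs the exponent $\eps_1+\cdots+\eps_k$ exactly, and a one-line union-bound estimate disposes of the excesses. Note that the $\eps_i=0$ special case is just the subadditivity of total variation distance, and the argument below is its direct generalization.

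\textbf{Steps.}
Write $\delta_i:=d_{\eps_i}(\cD_i\|\cD'_i)$. First I would introduce the (sub-probability) measure $\mu_i$ defined pointwise by $\mu_i(u):=\min\{\cD_i(u),\,e^{\eps_i}\cD'_i(u)\}$, for which $\mu_i\le e^{\eps_i}\cD'_i$ and $\mu_i\le\cD_i$ pointwise, and
\[
\sum_u \bigl(\cD_i(u)-\mu_i(u)\bigr)=\sum_u [\cD_i(u)-e^{\eps_i}\cD'_i(u)]_+=\delta_i,
\]
so $|\mu_i|=1-\delta_i$. Multiplying the pointwise domination across $i\in[k]$ then yields
\[
\prod_i \mu_i(u_i)\;\le\;e^{\eps_1+\cdots+\eps_k}\prod_i \cD'_i(u_i)\qquad\text{for every }(u_1,\dots,u_k).
\]
Since $\prod_i\cD_i(u_i)\ge\prod_i\mu_i(u_i)\ge 0$, the positive-part integrand appearing in $d_{\eps_1+\cdots+\eps_k}(\prod_i\cD_i\,\|\,\prod_i\cD'_i)$ satisfies
\[
\Bigl[\prod_i\cD_i(u_i)-e^{\eps_1+\cdots+\eps_k}\prod_i\cD'_i(u_i)\Bigr]_+\;\le\;\prod_i\cD_i(u_i)-\prod_i\mu_i(u_i).
\]
Summing over $(u_1,\dots,u_k)$ collapses the right-hand side to $1-\prod_i(1-\delta_i)$, and the elementary inequality $1-\prod_i(1-\delta_i)\le\sum_i \delta_i$ (the union bound for independent events of probabilities $\delta_i$) finishes the proof.

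\textbf{Main obstacle.}
The real subtlety is keeping the exponent clean. A naive pointwise expansion such as $ab-cd=b(a-c)+c(b-d)$ applied iteratively produces a bound of the form $\sum_i e^{\sum_{j<i}\eps_j}\delta_i$, with spurious factors of $e^{\eps_j}$ that render it strictly weaker than what the lemma claims. The key trick is to bundle the entire threshold $e^{\eps_i}\cD'_i$ into the dominated measure $\mu_i$, so that the $e^{\eps_i}$'s combine multiplicatively into a single $e^{\eps_1+\cdots+\eps_k}$ inside $\prod_i\mu_i$; the residual ``excess mass'' contributed by each coordinate is then unweighted and can be controlled by a plain additive union bound, giving exactly $\sum_i \delta_i$ with no exponential inflation.
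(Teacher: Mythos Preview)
Your proof is correct. The paper does not actually supply its own proof of this lemma; it merely introduces it as ``an alternative formulation of the basic composition theorem'' and uses it as a black box thereafter. Your argument via the dominated sub-probability measures $\mu_i(u)=\min\{\cD_i(u),e^{\eps_i}\cD'_i(u)\}$ is a clean, self-contained derivation, and it in fact yields the slightly sharper bound $1-\prod_i(1-\delta_i)$ before you relax it to $\sum_i\delta_i$.
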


\section{On the Near Optimality of (Discrete) Laplace Mechanism}

\begin{figure}[h]
\centering
\includegraphics[width=0.5\textwidth]{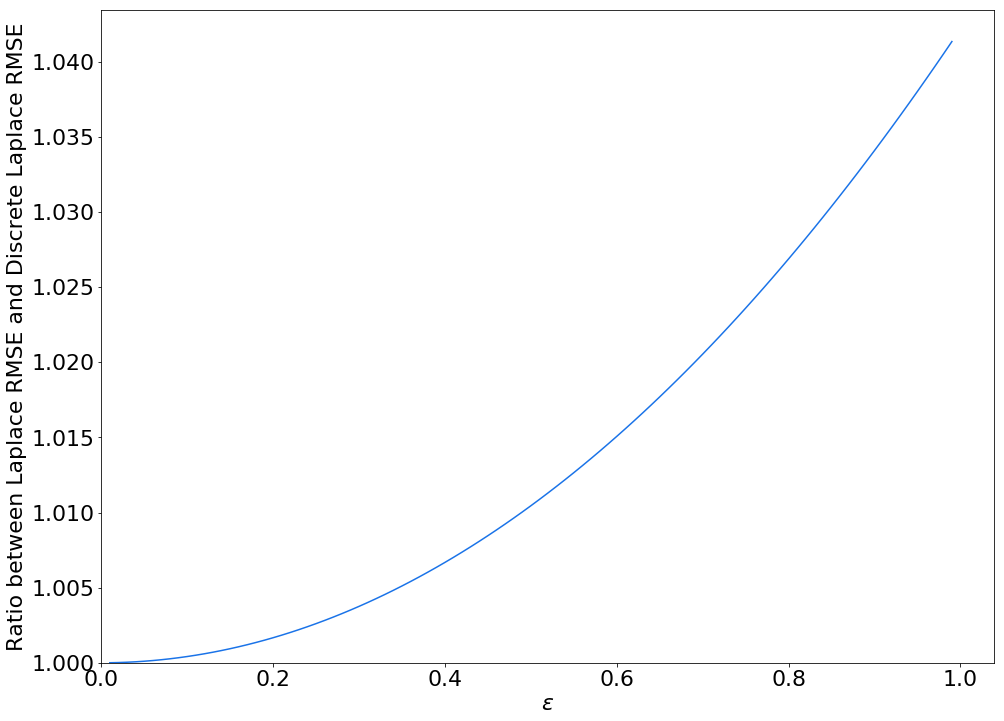}
\caption{Comparison between RMSE of the Laplace Mechanism and RMSE of the Discrete Laplace Mechanism when varing $\eps$.}
\label{fig:laplace-v-dlaplace}
\end{figure}

In this section, we briefly discuss the near-optimality of Laplace mechanism in the central model. First, we recall that it follows from~\citet{GhoshRS12} that the smallest MSE of any $\eps$-DP algorithm (in the central model) for binary summation (i.e., 1-summation) is at least that of the Discrete Laplace mechanism minus a term that converges to zero as $n \to \infty$. In other words, RMSE of the Discrete Laplace mechanism, up to an $o(1)$ additive factor, forms a lower bound for binary summation. Notice that a lower bound on the RMSE for binary summation also yields a lower bound on the RMSE for real summation, since the latter is a more general problem. We plot the comparison between RMSE of the Discrete Laplace mechanism and RMSE of the Laplace mechanism in \Cref{fig:laplace-v-dlaplace} for $\eps \in [0, 1]$. Since the former is a lower bound on RMSE of any $\eps$-DP algorithm for real summation, this plot shows how close the Laplace mechanism is to the lower bound. Specifically, we have found that, in this range of $\eps < 1$, the Laplace mechanism is within $5\%$ of the optimal error (assuming $n$ is sufficiently large). Finally, note that our real summation algorithm (\Cref{thm:real-main}) can achieve accuracy arbitrarily close to the Laplace mechanism, and hence our algorithm can also get to within $5\%$ of the optimal error for any $\eps \in [0, 1]$.

\section{Missing Proofs from \Cref{sec:protocol}}

Below we provide (simple) proofs of the generic form of the error (\Cref{obs:error}) and the communication cost (\Cref{obs:comm}) of our protocol.

\begin{proof}[Proof of \Cref{obs:error}]
Consider each user $i$. The sum of its output is not affected by the noise from $\cS$ since the sum of their messages are zero; in other words, the sum is simply $x_i + Z^{+1}_i - Z^{-1}_i$. As a result, the total error is exactly equal to $\left(\sum_{i \in [n]} Z^{+1}_i\right) - \left(\sum_{i \in [n]} Z^{-1}_i\right)$. Since each of $Z^{+1}_i, Z^{-1}_i$ is an i.i.d. random variable distributed as $\cDcentral_{/n}$, the error is distributed as $\cDcentral - \cDcentral$.
\end{proof}

\begin{proof}[Proof of \Cref{obs:comm}]
Each user sends at most one message corresponding to its input; in addition to this, the expected number of messages sent in the subsequent steps is equal to
\begin{align*}
\E\left[Z^{+1}_i + Z^{-1}_i + \sum_{\bs \in \cS} \sum_{m \in \bs} Z^{\bs}_i\right]
&= \E[\cDcentral_{/n}] + \E[\cDcentral_{/n}] + \sum_{\bs \in \cS} \sum_{m \in \bs} \E[\cD^{\bs}_{/n}] \\
&= 2\E[\cDcentral_{/n}] + \sum_{\bs \in \cS} |\bs| \cdot \E[\cD^{\bs}_{/n}] \\
&= \frac{1}{n}\left(2\E[\cDcentral] + \sum_{\bs \in \cS} |\bs| \cdot \E[\cD^{\bs}]\right)
\end{align*}
Finally, since each message is a number from $[-\Delta, \Delta]$, it can be represented using $\lceil \log_2 \Delta \rceil + 1$ bits as desired.
\end{proof}

\section{Missing Proofs from \Cref{sec:privacy_pf_and_params}}

\subsection{From Shuffle Model to Central Model}
\label{sec:shuffle-v-central}

Let us restate \Cref{alg:central-m} in a slightly different notation where each user's input $x_1, \dots, x_n$ is represented explicitly in the algorithm (instead of the histogram as in \Cref{alg:central-m}):
\begin{algorithm}[H]
\caption{\small Central Algorithm} \label{alg:central}
\begin{algorithmic}[1]
\Procedure{\centralalg($x_1, \dots, x_n$)}{}
\SubState $u_{-\Delta}, \dots, u_{-1} \leftarrow 0$
\For{$j \in [\Delta]$}
\SubState $u_j \leftarrow |\{i \in [n] \mid x_i = j\}|$
\EndFor
\State Sample $z^{+1} \sim \cDcentral$
\State $u_{+1} \leftarrow u_{+1} + Z^1$
\State Sample $z^{-1} \sim \cDcentral$
\State $u_{-1} \leftarrow u_{-1} + Z^{-1}$
\For{$\bs \in \cS$}
\SubState Sample $z^{\bs} \sim \cD^{\bs}$
\SubFor{$m \in \bs$}
\SubSubState $u_m \leftarrow u_m + z^{\bs}$
\EndFor
\EndFor 
\State \Return $(u_{-\Delta}, \dots, u_{-1}, u_{+1}, \dots,  u_{+\Delta})$.
\EndProcedure
\end{algorithmic}
\end{algorithm}

\begin{observation} \label{obs:shuffle-to-central}
\randomizer\ is $(\eps, \delta)$-DP in the shuffle model iff \centralalg\ is $(\eps, \delta)$-DP in the central model.
\end{observation}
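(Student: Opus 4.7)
The plan is to establish the equivalence by showing that the shuffler's output (a random permutation of the combined messages) and the count vector $(u_{-\Delta}, \dots, u_{-1}, u_{+1}, \dots, u_{+\Delta})$ produced by \centralalg\ are related by input-independent post-processing in both directions, and that the induced distributions on count vectors coincide. Since $(\eps, \delta)$-DP is preserved under post-processing, establishing both of these facts immediately yields the ``iff.''

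First I would argue the post-processing equivalence. Given the shuffler's output, a deterministic counting procedure (independent of any user's input) recovers the count vector by tallying the occurrences of each symbol in $\{-\Delta, \dots, -1, +1, \dots, +\Delta\}$. Conversely, given a count vector, one can construct a uniformly random permutation of the corresponding multiset using only fresh randomness, which has the same distribution as the shuffler's output on that multiset. Because both transformations depend only on the randomness of the post-processor and not on $x_1, \dots, x_n$, DP parameters are preserved in both directions.

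Second I would verify that the count vector obtained by aggregating the per-user outputs of \randomizer\ has the same joint distribution as the output of \centralalg. For the input portion, user $i$ contributes $1$ to $u_{x_i}$ exactly when $x_i \neq 0$, which matches the initialization $u_j \gets |\{i : x_i = j\}|$. For the central noise, each user samples $Z^{+1}_i, Z^{-1}_i \sim \cDcentral_{/n}$ independently, so by the definition of infinite divisibility the sums $\sum_i Z^{+1}_i$ and $\sum_i Z^{-1}_i$ are independently distributed as $\cDcentral$, matching the samples $z^{+1}, z^{-1} \sim \cDcentral$ added to $u_{+1}, u_{-1}$ in \centralalg. For each noise atom $\bs \in \cS$, infinite divisibility of $\cD^{\bs}$ gives $\sum_i Z^{\bs}_i \sim \cD^{\bs}$, and this aggregate is added to $u_m$ with multiplicity equal to the number of times $m$ appears in $\bs$, exactly matching \centralalg's inner loop. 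Independence across the different noise sources ($\cDcentral$ copies and the $\cD^{\bs}$'s) is preserved because all the underlying per-user samples are drawn independently.

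The argument is essentially a bookkeeping exercise, so I do not expect a true obstacle; the only point worth stating carefully is the appeal to infinite divisibility, which is precisely why the per-user distributions $\cDcentral_{/n}$ and $\cD^{\bs}_{/n}$ were introduced in Section~\ref{sec:prelims}. Once the distributional match is recorded, the bi-directional post-processing argument closes the proof.
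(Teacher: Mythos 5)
Your proposal is correct and follows essentially the same route as the paper's proof: first observe that the shuffler's output is informationally equivalent (via input-independent post-processing in both directions) to the count vector of messages, and then verify via infinite divisibility that the aggregated per-user samples from $\cDcentral_{/n}$ and $\cD^{\bs}_{/n}$ reproduce the distribution of the noise terms added in \centralalg. The paper states the first step more tersely ("all the analyzer sees is a number of messages"), whereas you spell out both directions of the post-processing reduction explicitly, but the underlying argument is identical.
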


\begin{proof}
Consider the analyzer's view for \randomizer; all the analyzer sees is a number of messages $-\Delta, \dots, -1, +1, \dots, +\Delta$. Let $u_i$ denote the number of messages $i$, for a non-zero integer $i \in [-\Delta, \Delta]$. We will argue that $(u_{-\Delta}, \dots, u_{-1}, u_{+1}, \dots,  u_{+\Delta})$  has the same distribution as the output of \centralalg, from which the observation follows. To see this, notice that the total number of noise atoms $\bs$ in \randomizer\ across all users is $z^{\bs}_1 + \cdots + z^{\bs}_n$ which is distributed as $\cD^{\bs}$. Similarly, the number of additional $-1$ messages and that of $+1$ messages (from Line~\ref{line:plusoneminusone} of \randomizer) are both distributed as (independent) $\cDcentral$ random variables. As a result, $(u_{-\Delta}, \dots, u_{-1}, u_{+1}, \dots,  u_{+\Delta})$  has the same distribution as the output of \centralalg.
\end{proof}


\Cref{obs:shuffle-to-central-matrix} then immediately follows from \Cref{obs:shuffle-to-central} and the fact that the two algorithms are equivalent.

\subsection{Flooding Central Noise: Proof of \Cref{lem:correlated-coord}}

\begin{proof}[Proof of \Cref{lem:correlated-coord}]
Consider any two input neighboring datasets $\bh, \bh'$. Let $\eps := \eps^* + \eps_1$. From \Cref{lem:dp-hockeystick}, it suffices to show that $d_{\eps}(\cM_{\cor}(\bh), \cM_{\cor}(\bh')) \leq \delta_1$. Let $\kappa = \left<\bv, \bh\right> - \left<\bv, \bh'\right>$; note that by definition of neighboring datasets, we have $-\Delta \leq \kappa \leq \Delta$. We may simplify $d_{\eps}(\cM_{\cor}(\bh), \cM_{\cor}(\bh'))$ as
\begin{align}
&d_{\eps}(\cM_{\cor}(\bh), \cM_{\cor}(\bh')) \nonumber \\
&= \sum_{a, b \in \Z} \Big[\Pr[z^{+1} - z^{-1} + \left<\bv, \bh\right> = a, z^{-1} + \tz^{-1, +1} = b] - e^\eps \Pr[z^1 - z^{-1} + \left<\bv, \bh'\right> = a, z^{-1} + \tz^{-1, +1} = b]\Big]_+ \nonumber \\
&= \sum_{c, b \in \Z} \Big[\Pr[z^{+1} - z^{-1} = c - \kappa, z^{-1} + \tz^{-1, +1} = b] - e^\eps \Pr[z^{+1} - z^{-1} = c, z^{-1} + \tz^{-1, +1} = b]\Big]_+, \label{eq:correlated-hs-div}
\end{align}
where the second equality comes from simply substituting $c = a - \left<\bv, \bh'\right>$.

For convenience, let 
$\lambda_c := \max\{0, \kappa - c\}$. We may expand the first probability above as
\begin{align}
&\Pr[z^{+1} - z^{-1} = c - \kappa, z^{-1} + \tz^{-1, +1} = b] \nonumber \\
&= \sum_{d = \lambda_c}^{\infty} \Pr[z^{+1} = c - \kappa + d]\Pr[z^{-1} = d]\Pr[\tz^{\{-1, +1\}} = b - d]. \label{eq:correlated-first-prob-expanded}
\end{align}
Similarly, let $\gamma_c := \max\{0, c\}$. We may expand the second probability above as
\begin{align}
&\Pr[z^{+1} - z^{-1} = c, z^{-1} + \tz^{-1, +1} = b] \nonumber \\
&= \sum_{d' = \gamma_c}^{\infty} \Pr[z^{+1} = c + d']\Pr[z^{-1} = d']\Pr[\tz^{\{-1, +1\}} = b - d'] \nonumber \\
&= \sum_{d = \lambda_c}^{\infty} \Pr[z^{+1} = c + d - (\lambda_c - \gamma_c)]\Pr[z^{-1} = d - (\lambda_c - \gamma_c)]\Pr[\tz^{\{-1, +1\}} = b - d + (\lambda_c - \gamma_c)]. \label{eq:correlated-second-prob-expanded}
\end{align} 
Next, notice that for $d \geq \lambda_c$, we have
\begin{align*}
&\frac{\Pr[z^{+1} = c + d - (\lambda_c - \gamma_c)]\Pr[z^{-1} = d - (\lambda_c - \gamma_c)]}{\Pr[z^{+1} = c - \kappa + d]\Pr[z^{-1} = d]} \\
&= \frac{\exp\left(-\frac{\eps^*}{\Delta}(c + d - (\lambda_c - \gamma_c))\right)\exp\left(-\frac{\eps^*}{\Delta}(d - (\lambda_c - \gamma_c))\right)}{\exp\left(-\frac{\eps^*}{\Delta}(c - \kappa + d)\right)\exp\left(-\frac{\eps^*}{\Delta}(d)\right)} \\
&= \exp\left(-\frac{\eps^*}{\Delta}(\kappa - 2(\lambda_c - \gamma_c))\right) \\
&\geq \exp(-\eps^*),
\end{align*}
where the last inequality employs a simple observation that $(\lambda_c - \gamma_c) \leq \kappa \leq \Delta$.

Plugging the above back into \eqref{eq:correlated-second-prob-expanded}, we get
\begin{align}
&\Pr[z^{+1} - z^{-1} = c, z^{-1} + \tz^{-1, +1} = b] \\
&\geq e^{-\eps^*} \cdot \sum_{d = \lambda_c}^{\infty} \Pr[z^{+1} = c - \kappa + d]\Pr[z^{-1} = d]\Pr[\tz^{\{-1, +1\}} = b - d + (\lambda_c - \gamma_c)]. \label{eq:correlated-second-prob-expanded2}
\end{align}
Then, plugging both \eqref{eq:correlated-first-prob-expanded} and \eqref{eq:correlated-second-prob-expanded2} into \eqref{eq:correlated-hs-div}, we arrive at
\begin{align*}
&d_{\eps}(\cM_{\cor}(\bh), \cM_{\cor}(\bh')) \\
&\leq \sum_{c, b \in \Z} \left[ \sum_{d = \lambda_c}^{\infty} \Pr[z^{+1} = c - \kappa + d]\Pr[z^{-1} = d]  \right. \\
& \qquad \qquad \left(\Pr[\tz^{\{-1, +1\}} = b - d] - e^{\eps_1} \Pr[\tz^{\{-1, +1\}} = b - d + (\lambda_c - \gamma_c)]\right) \Bigg]_+ \\
&\leq \sum_{c, b \in \Z}\sum_{d = \lambda_c}^{\infty} \Pr[z^{+1} = c - \kappa + d]\Pr[z^{-1} = d] \\
& \qquad\qquad \cdot \left[\Pr[\tz^{\{-1, +1\}} = b - d] - e^{\eps_1} \Pr[\tz^{\{-1, +1\}} = b - d + (\lambda_c - \gamma_c)]\right]_+ \\
&= \sum_{c \in \Z} \sum_{d = \lambda_c}^{\infty} \Pr[z^{+1} = c - \kappa + d]\Pr[z^{-1} = d] \\
& \qquad\qquad \left(\sum_{b \in \Z} \left[\Pr[\tz^{\{-1, +1\}} = b - d] - e^{\eps_1} \Pr[\tz^{\{-1, +1\}} = b - d + (\lambda_c - \gamma_c)]\right]_+ \right) \\
&= \sum_{c \in \Z} \sum_{d = \lambda_c}^{\infty} \Pr[z^{+1} = c - \kappa + d]\Pr[z^{-1} = d] \cdot d_{\eps_1}(\hcD \| \hcD + (\gamma_c - \lambda_c)).
\end{align*}
Now, recall that $|\gamma_c - \lambda_c| \leq |\kappa| \leq \Delta$. Thus, from our assumption that the $\hcD$-mechanism is $(\eps_1, \delta_1)$-DP for $\Delta$-summation and \Cref{lem:noiseaddition-hockey-stick}, we can conclude that $d_{\eps_1}(\hcD \| \hcD + (\gamma_c - \lambda_c)) \leq \delta_1$. Hence, we can further bound $d_{\eps}(\cM_{\cor}(\bh), \cM_{\cor}(\bh'))$ as
\begin{align*}
d_{\eps}(\cM_{\cor}(\bh), \cM_{\cor}(\bh')) \leq \delta_1 \cdot \left(\sum_{c \in \Z} \sum_{d = \lambda_c}^{\infty} \Pr[z^{+1} = c - \kappa + d]\Pr[z^{-1} = d]\right) = \delta_1.
\end{align*}
This implies that $\cM_{\cor}$ is $(\eps^* + \eps_1, \delta_1)$-DP as desired.
\end{proof}

\subsection{Finding a Right Inverse: Proof of \Cref{thm:right-inv}}

\begin{proof}[Proof of \Cref{thm:right-inv}]
We first select $\cS$ to be the collection $\{-1, +1\}$ and all multisets of the form $\{i, -\lfloor i/2 \rfloor, -\lceil i/2 \rceil\}$ for all $i \in \{-\Delta, \dots, -2, +2, \dots, +\Delta\}$.

Throughout this proof, we write $\bc_i$ to denote the $i$th column of $\bC$. We construct a right inverse $\bC$ by constructing its columns iteratively as follows:
\begin{itemize}
\item First, we let $\bc_{-1} = \bone_{\{-1, +1\}}$ denote the indicator vector of coordinate $\bs = \{-1, +1\}$.
\item For convenience, we let $\bc_1 = \bzero$ although this column is not present in the final matrix $\bC$.
\item For each $i \in \{2, \dots, \Delta\}$:
\begin{itemize}
\item Let $i_1 = \lfloor i / 2 \rfloor$ and $i_2 = \lceil i / 2 \rceil$.
\item Let $\bc_i$ be $\bone_{\{i, -i_1, -i_2\}} - \bc_{-i_1} - \bc_{-i_2}$.
\item Let $\bc_{-i}$ be $\bone_{\{-i, i_1, i_2\}} - \bc_{i_1} - \bc_{i_2}$.
\end{itemize}
\end{itemize}

We will now show by induction that $\bC$ as constructed above indeed satisfies $\tbA\bC = \bI$. Let $\tbA_{i, *}$ denote the $i$th row of $\tbA$. Equivalently, we would like to show that\footnote{We use $\bone[E]$ to denote the indicator of condition $E$, i.e., $\bone[E] = 1$ if $E$ holds and $\bone[E] = 0$ otherwise.}
\begin{align} \label{eq:inverse-dot-product-condition}
\tbA_{i', *}\bc_{i} = \bone[i' = i],
\end{align}
for all $[-\Delta, \Delta]_{-1}$.
We will prove this by induction on $|i|$.

\paragraph{(Base Case)} We will show that this holds for $i = -1$.  Since $\bc_1 = \bone_{\{-1, 1\}}$, we have
\begin{align*}
\tbA_{i', *}\bc_i = \tbA_{i, \{-1, 1\}} = \bone[i \in \{-1, 1\}] = \bone[i = -1].
\end{align*}

\paragraph{(Inductive Step)} Now, suppose that~\eqref{eq:inverse-dot-product-condition} holds for all $i$ such that $|i| < m$ for some $m \in \N \setminus \{1\}$. We will now show that it holds for $i = m, -m$ as well. Let $m_1 = \lfloor m/2 \rfloor$ and $m_2 = \lceil m/2 \rceil$. For $i = m$, let $\bs = \{m, -m_1, -m_2\}$; we have
\begin{align*}
\bA_{i', *} \bc_i 
&= \bA_{i', *} \left(\bone_{\bs} - \bc_{-m_1} - \bc_{-m_2}\right) \\
\text{(From inductive hypothesis)} &= \bA_{i', *} \bone_{\bs} - \bone[i' = -m_1] - \bone[i' = -m_2] \\
&= \bs_{i'} - \bone[i' = -m_1] - \bone[i' = -m_2] \\
&= \bone[i = i'],
\end{align*}
where the last inequality follows from the definition of $\bs$.

The case where $i = -m$ is similar. Specifically, letting $\bs = \{-m, m_1, m_2\}$, we have
\begin{align*}
\bA_{i', *} \bc_i 
&= \bA_{i', *} \left(\bone_{\bs} - \bc_{m_1} - \bc_{m_2}\right) \\
\text{(From inductive hypothesis)} &= \bA_{i', *} \bone_{\bs} - \bone[i' = m_1] - \bone[i' = m_2] \\
&= \bs_{i'} - \bone[i' = m_1] - \bone[i' = m_2] \\
&= \bone[i = i'].
\end{align*}

Let $\Gamma = \Delta \cdot \lceil 1 + \log \Delta \rceil$.
Let $\bt \in \Z^\cS$ be defined by
\begin{align*}
\bt_{\bs} =
\begin{cases}
\Gamma &\text{ if } \bs = \{-1, 1\} \\
\lceil \Gamma / m \rceil &\text{ if } \bs = \{m, -\lceil m/2 \rceil, -\lfloor m / 2 \rfloor\} \text{ or } \bs = \{-m, \lceil m/2 \rceil, \lfloor m / 2 \rfloor\} \text{ for some } m \in \{2, \dots, \Delta\}.
\end{cases}
\end{align*}
Since $\|\bs\|_1 \leq 3$ for all $\bs \in \cS$, we have
\begin{align*}
\|\bt\|_{\cS} \leq 3 \cdot \|\bt\|_1 = \Gamma + \sum_{m=2,\dots,\Delta} \lceil \Gamma / m \rceil = \Gamma \cdot O(\log \Delta) = O(\Delta \log^2 \Delta),
\end{align*}
as desired. 

Finally, we will show via induction on $|i|$ that
\begin{align} \label{eq:bounded-weighted-sensitivity}
\sum_{\bs \in \cS} \frac{|\bC_{\bs, i}|}{\bt_\bs} \leq \frac{|i| \cdot \lceil 1 + \log |i| \rceil}{\Gamma},
\end{align}
which, from our choice of $\Gamma$, implies the last property.

\paragraph{(Base Case)}
We have
\begin{align*}
\sum_{\bs \in \cS} \frac{|\bC_{\bs, -1}|}{\bt_\bs} = \frac{1}{\bt_{\{-1, +1\}}} = \frac{1}{\Gamma}.
\end{align*}

\paragraph{(Inductive Step)}
For convenience, in the derivation below, we think of $\bC_{\bs, 1}$ as 0 for all $\bs \in \cS$; note that in fact, row $1$ does not exist for $\bC$.

Suppose that~\eqref{eq:bounded-weighted-sensitivity} holds for all $i \in [-\Delta, \Delta]_{-1}$ such that $|i| < m$ for some $m \in \N \setminus \{1\}$. Now, consider $i = m$; letting $m_1 = \lceil m/2 \rceil, m_2 = \lfloor m/2 \rfloor$from the definition of $\bc_i$, we have
\begin{align*}
\sum_{\bs \in \cS} \frac{|\bC_{\bs, i}|}{\bt_\bs}
&\leq \frac{1}{\bt_{\{m, -m_1, -m_2\}}} + \sum_{\bs \in \cS} \frac{|\bC_{\bs, m_1}|}{\bt_\bs} + \sum_{\bs \in \cS} \frac{|\bC_{\bs, m_2}|}{\bt_\bs} \\
(\text{From inductive hypothesis}) &\leq \frac{1}{\bt_{\{m, -m_1, -m_2\}}} + \frac{m_1 \lceil 1 + \log m_1 \rceil}{\Gamma} + \frac{m_2 \lceil 1 + \log m_2 \rceil}{\Gamma} \\
&\leq \frac{1}{\bt_{\{m, -m_1, -m_2\}}} + \frac{m_1 \lceil \log m \rceil}{\Gamma} + \frac{m_2 \lceil \log m \rceil}{\Gamma} \\
&= \frac{1}{\bt_{\{m, -m_1, -m_2\}}} + \frac{m \lceil \log m \rceil}{\Gamma} \\
(\text{From our choice of } \bt_{\{m, -m_1, -m_2\}}) &\leq \frac{m}{\Gamma} + \frac{m \lceil \log m \rceil}{\Gamma} \\
&= \frac{m \lceil 1 + \log m \rceil}{\Gamma}.
\end{align*}
Similarly, for $i = -m$, we have
\begin{align*}
\sum_{\bs \in \cS} \frac{|\bC_{\bs, i}|}{\bt_\bs}
&\leq \frac{1}{\bt_{\{-m, m_1, m_2\}}} + \sum_{\bs \in \cS} \frac{|\bC_{\bs, -m_1}|}{\bt_\bs} + \sum_{\bs \in \cS} \frac{|\bC_{\bs, -m_2}|}{\bt_\bs} \\
(\text{From inductive hypothesis}) &\leq \frac{1}{\bt_{\{-m, m_1, m_2\}}} + \frac{m_1 \lceil 1 + \log m_1 \rceil}{\Gamma} + \frac{m_2 \lceil 1 + \log m_2 \rceil}{\Gamma} \\
&\leq \frac{1}{\bt_{\{-m, m_1, m_2\}}} + \frac{m_1 \lceil \log m \rceil}{\Gamma} + \frac{m_2 \lceil \log m \rceil}{\Gamma} \\
&= \frac{1}{\bt_{\{-m, m_1, m_2\}}} + \frac{m \lceil \log m \rceil}{\Gamma} \\
(\text{From our choice of } \bt_{\{-m, m_1, m_2\}}) &\leq \frac{m}{\Gamma} + \frac{m \lceil \log m \rceil}{\Gamma} \\
&= \frac{m \lceil 1 + \log m \rceil}{\Gamma}.
\qedhere
\end{align*}

\end{proof}

\subsection{Negative Binomial Mechanism for Linear Queries: Proof of \Cref{cor:nb}}

\begin{proof}[Proof of \Cref{cor:nb}]
We denote the $(\cD^i)_{i \in \cI}$-noise addition mechanism by $\cM$. Furthermore, we write $\cbD$ to denote the noise distribution (i.e., distribution of $\bz$ where $z_i \sim \cD^i$).

Consider two neighboring datasets $\bh, \bh'$. We have
\begin{align*}
d_{\eps}(\cM(\bh) \| \cM(\bh'))
&= d_{\eps}\left(\bQ\bh + \cbD \middle\| \bQ\bh' + \cbD\right) \\
&= d_{\eps}\left(\cbD \| \bQ(\bh' - \bh) + \cbD\right) \\
&= d_{\eps}\left(\prod_{i \in \cI} \cD^i \middle\| \prod_{i \in \cI} ((\bQ(\bh' - \bh))_i + \cD^i)\right).
\end{align*}
Now, notice that $\bQ(\bh' - \bh)$ is dominated by $2\bt$. Let $\eps_i := \eps \cdot \frac{|(\bQ(\bh' - \bh))_i|}{2\bt_i}$ for all $i \in \cI$; we have $\sum_{i \in \cI} \eps_i \leq \eps$. As a result, applying \Cref{lem:basic-composition-hockeystick}, we get
\begin{align*}
d_{\eps}(\cM(\bh) \| \cM(\bh')) 
&\leq \sum_{i \in \cI} d_{\eps_i}(\cD^i \| (\bQ(\bh' - \bh))_i + \cD^i) \\
(\text{From \Cref{thm:nb-scalar} and our choice of } \eps_i, p_i, r_i) &\leq \sum_{i \in \cI} \delta / |\cI|  \\
&= \delta,
\end{align*}
which concludes our proof.
\end{proof}

\subsection{From Integers to Real Numbers: Proof of \Cref{thm:real-main}}

\label{sec:real-sum}

We now prove \Cref{thm:real-main}. The proof follows the discretization via randomized rounding of~\cite{balle_merged}.

\begin{proof}[Proof of \Cref{thm:real-main}]
The algorithm for real summation works as follows:
\begin{itemize}
\item Let $\Delta = \lceil \frac{\eps}{2} \cdot \sqrt{n/\zeta} \rceil$.
\item Each user randomly round the input $x_i \in [0, 1]$ to $y_i \in \{0, \dots, \Delta\}$ such that
\begin{align*}
y_i =
\begin{cases}
\lfloor x_i \Delta \rfloor &\text{ with probability } 1 - (x_i - \lfloor x_i \Delta \rfloor), \\
\lfloor x_i \Delta \rfloor + 1 &\text{ with probability }  x_i - \lfloor x_i \Delta \rfloor. \\
\end{cases}
\end{align*}
\item Run $\Delta$-summation algorithm from \Cref{thm:dsum-main} on the inputs $y_1, \dots, y_n$ with $\gamma = \zeta / 2$ to get an estimated sum $s$.
\item Output $s/\Delta$.
\end{itemize}

Using the communication guarantee of \Cref{thm:dsum-main}, we can conclude that, in the above algorithm, each user sends 
\begin{align*}
1 + \left(\frac{\Delta}{\gamma \eps n}\right) \cdot O\left(\log(1/\delta) + \log^2 \Delta \cdot \log(\Delta/\delta)\right) = 1 + \left(\frac{1}{\zeta \sqrt{\zeta n}}\right) O\left(\log(1/\delta) + \log^2(n/\zeta) \cdot \log(n/(\zeta\delta))\right)
\end{align*}
messages in expectation and that each message contains
\begin{align*}
\lceil \log \Delta \rceil + 1 = \left\lceil \log \left(\frac{\eps}{2} \cdot \sqrt{n/\zeta}\right) \right\rceil + 1 \leq 0.5(\log n + \log(1/\zeta)) + O(1)
\end{align*}
bits.

We will next analyze the MSE of the protocol. As guaranteed by \Cref{thm:dsum-main}, $s$ can be written as $y_1 + \cdots + y_n + e$ where $e$ is distributed as $\DLap((1 - \gamma)\eps/\Delta)$. Since $e, \Delta x_1 - y_1, \dots, \Delta x_n - y_n$ are all independent and have zero mean, the MSE can be rearranged as
\begin{align*}
\E[(x_1 + \cdots + x_n - s/\Delta)^2]
&= \frac{1}{\Delta^2} \cdot \E[((\Delta x_1 - y_1) + \cdots + (\Delta x_n - y_n) - e)^2] \\
&= \frac{1}{\Delta^2} \left(\E[e^2] + \sum_{i=1}^{n} \E[(\Delta x_i - y_i)^2] \right).
\end{align*}
Recall that $\E[e^2]$ is simply the variance of the discrete Laplace distribution with parameter $(1 - \gamma)\eps/\Delta$, which is at most $\frac{2\Delta^2}{(1 - \gamma)^2\eps^2}$. Moreover, for each $i$, $\E[(\Delta x_i - y_i)^2] = (x_i - \lfloor x_i \Delta \rfloor)(1 - (x_i - \lfloor x_i \Delta \rfloor)) \leq 1/4$. Plugging this to the above equality, we have
\begin{align*}
\E[(x_1 + \cdots + x_n - s/\Delta)^2] = \frac{1}{\Delta^2} \left(\frac{2\Delta^2}{(1 - \gamma)^2\eps^2} + \frac{n}{4}\right) \leq \frac{2}{(1 - \zeta)^2\eps^2},
\end{align*}
where the inequality follows from our parameter selection.
The right hand side of the above inequality is indeed the MSE of Laplace mechanism with parameter $(1 - \zeta)\eps$. This concludes our proof.
\end{proof}

\section{From Real Summation to 1-Sparse Vector Summation: Proof of \Cref{cor:1-sparse}}

In this section, we prove our result for 1-Sparse Vector Summation (\Cref{cor:1-sparse}). The idea, formalized below, is to run our real-summation randomizer in each coordinate independently.

\begin{proof}[Proof of \Cref{cor:1-sparse}]
Let $v^i \in \R^d$ denote the (1-sparse) vector input to the $i$th user. The randomizer simply runs the real summation randomizer on each coordinate of $v^i$, which gives the messages to send. It then attaches to each message the coordinate, as to distinguish the different coordinates. We stress that the randomizer for each coordinate is run with privacy parameters $(\eps/2, \delta/2)$ instead of $(\eps, \delta)$.

\begin{algorithm}[h]
\caption{\small 1-Sparse Vector Randomizer}
\begin{algorithmic}[1]
\Procedure{\sparsevectorrandomizer$_n(v^i)$}{}
\For{$j \in [d]$}
\SubState $S_j \leftarrow \randomizer^{\eps/2, \delta/2}(v^i_j)$
\SubFor{$m \in S_j$}
\SubSubState Send $(i, m)$
\EndFor
\EndFor
\EndProcedure
\end{algorithmic}
\end{algorithm}

The analyzer simply separates the messages based on the coordinates attached to them. Once this is done, messages corresponding to each coordinate are then plugged into the real summation randomizer (that just sums them up), which gives us the estimate of that coordinate. (Note that, similar to $R$, each $R_j$ is a multiset.)

\begin{algorithm}[h]
\caption{\small 1-Sparse Vector Randomizer} \label{alg:sparse_randomizer}
\begin{algorithmic}[1]
\Procedure{\sparsevectoranalyzer}{}
\State $R \leftarrow$ multiset of messages received
\For{$j \in [d]$}
\SubState $R_j \leftarrow \{m \mid (j, m) \in R_j\}$ 
\EndFor
\State \Return $(\analyzer(R_1), \dots, \analyzer(R_d))$
\EndProcedure
\end{algorithmic}
\end{algorithm}

The accuracy claim follows trivially from that of the real summation accuracy in \Cref{thm:real-main} with $(\eps/2, \delta/2)$-DP. In terms of the communication complexity, since $v^i$ is non-zero in only one coordinate and that the expected number of messages sent for \randomizer\ with zero input is only $\tilde{O}_{\zeta, \epsilon}\left(\frac{\log(1/\delta)}{\sqrt{n}}\right)$, the total expected number of messages sent by \sparsevectorrandomizer\ is
\begin{align*}
1 + d \cdot \tilde{O}_{\zeta, \epsilon}\left(\frac{\log(1/\delta)}{\sqrt{n}}\right)
\end{align*}
as desired. Furthermore, each message is the real summation message, which consists of only $\tfrac{1}{2} \log n + O(\log \frac{1}{\zeta})$ bits, appended with a coordinate, which can be represented in $\lceil \log d\rceil$ bits. Hence, the total number of bits required to represent each message of \sparsevectorrandomizer\ is $\log{d} + \tfrac{1}{2} \log n + O(\log \frac{1}{\zeta})$.

We will finally prove that the algorithm is $(\eps/2, \delta/2)$-DP. Consider any two neighboring input datasets $X = (v_1, \dots, v_n)$ and $X' = (v_1, \dots, v_{n - 1}, v'_n)$ that differs only on the last user's input. Let $J \subseteq [d]$ denote the set of coordinates that $v_n, v'_n$ differ on. Since $v_n, v'_n$ are both 1-sparse, we can conclude that $|J| \leq 2$. Notice that the view of the analyzer is $R_1, \dots, R_d$; let $\cD_1, \dots, \cD_d$ be the distributions of $R_1, \dots, R_d$ respectively for the input dataset $X$, and $\cD'_1, \dots, \cD'_d$ be the respective distributions for the input dataset $X'$. We have
\begin{align*}
d_{\eps}(\cD_1 \times \cdots \times \cD_d \| \cD'_1 \times \cdots \times \cD'_d) &= d_{\eps}\left(\prod_{j \in J} \cD_j \middle\| \prod_{j \in J} \cD'_j\right) \\
(\text{From \Cref{lem:basic-composition-hockeystick} and } |J| \leq 2) &\leq \sum_{j \in J} d_{\eps/2}\left(\cD_j \| \cD'_j\right) \\
(\text{Since \randomizer\ is } (\eps/2,\delta/2)\text{-DP}) &\leq \sum_{i \in J} \delta/2 \\
&\leq \delta.
\end{align*}
From this and since $R_1, \dots, R_d$ are independent, we can conclude that the algorithm is $(\eps, \delta)$-DP as claimed.
\end{proof}

\section{Concrete Parameter Computations}
\label{sec:concrete-param}

In this section, we describe modifications to the above proofs that result in a more practical set of parameters. As explained in the next section, these are used in our experiments to get a smaller amount of noise than the analytic bounds.

\subsection{Tight Bound for Matrix-Specified Linear Queries}

By following the definition of differential privacy, one can arrive at a compact description of when the mechanism is differentially private:
\begin{lemma}
The $\cbD$-noise addition mechanism is $(\eps, \delta)$-DP for $\bQ$-linear query problem if the following holds for every pair of columns $j, j' \in [\Delta]$ of $\bQ$:
\begin{align*}
d_{\eps}\left(\prod_{i \in \cI} (\cD^i + Q_{i, j} - Q_{i, j'}) \middle\| \prod_{i \in \cI} \cD^i\right) \leq \delta.
\end{align*}
\end{lemma}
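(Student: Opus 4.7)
The plan is to deduce this from the standard characterization of DP in terms of the hockey stick divergence together with the translation invariance of that divergence. By the lemma recalled in the preliminaries ($\mathcal{A}$ is $(\eps,\delta)$-DP iff $d_{\eps}(\mathcal{A}(X)\|\mathcal{A}(X'))\leq\delta$ for all neighboring $X,X'$), it suffices to bound $d_{\eps}(\cM(\bh)\|\cM(\bh'))$ for every pair of neighboring histograms $\bh,\bh'\in\Z_{\geq 0}^{[\Delta]}$, where $\cM$ denotes the $\cbD$-noise addition mechanism.

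Since the coordinates of the noise vector $\bz$ are sampled independently from $\cD^i$, the output distribution $\cM(\bh)=\bQ\bh+\bz$ factors as $\prod_{i\in\cI}(\cD^i+(\bQ\bh)_i)$. Writing the neighboring relation at the histogram level, $\bh-\bh'$ has at most two non-zero entries of unit magnitude, corresponding to a single user changing their input from some $j'$ to some $j$ (with the convention that $j$ or $j'$ may equal $0$, interpreted as the all-zero column, to handle the case where a user moves to or from the zero input). Consequently $\bQ\bh-\bQ\bh'=\bq_j-\bq_{j'}$ coordinatewise, i.e.\ $(\bQ\bh)_i-(\bQ\bh')_i=Q_{i,j}-Q_{i,j'}$.

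Using the fact that $d_{\eps}(X+c\|Y+c)=d_{\eps}(X\|Y)$ for any deterministic shift $c$ (direct from the definition), I would shift both $\cM(\bh)$ and $\cM(\bh')$ by $-\bQ\bh'$, which collapses the expression to
\[
d_{\eps}\Big(\prod_{i\in\cI}\big(\cD^i+Q_{i,j}-Q_{i,j'}\big)\,\Big\|\,\prod_{i\in\cI}\cD^i\Big).
\]
By hypothesis this is at most $\delta$ for every pair $(j,j')\in[\Delta]\times[\Delta]$, and the conclusion follows by maximizing over neighboring pairs.

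No genuine obstacle arises: the proof is essentially a bookkeeping reduction translating the DP definition into a compact product form for a noise-addition mechanism with independent coordinates. The one detail to be careful about is the enumeration of neighboring cases, in particular that ``add'' and ``remove'' type changes (a user entering or leaving with a nonzero value) are covered by formally allowing $j$ or $j'$ to be the zero column; once this convention is fixed, the reduction is immediate.
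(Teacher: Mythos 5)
Your proof is correct and follows essentially the same route as the paper's (very terse) argument: factor the output distribution of the independent-coordinate noise mechanism into a product over $\cI$, use the shift-invariance of $d_{\eps}$, and invoke the hockey-stick characterization of DP (\Cref{lem:dp-hockeystick}). You also correctly flag the bookkeeping subtlety about ``add''/``remove'' neighbors (a user moving to or from the zero input), which the paper's one-line proof silently elides and which is only harmless here because the lemma is applied to a matrix $\left[\bzero~\bc_2~\cdots~\bc_\Delta\right]$ whose first column is $\bzero$.
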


\begin{proof}
This follows from the fact that the output distribution is $\prod_{i \in \cI} (\cD^i + \bQ_{i, *} \bh)$, $d_{\eps}$ is shift-invariant, and \Cref{lem:dp-hockeystick}.
\end{proof}

Plugging the above into the negative binomial mechanism, we get:
\begin{lemma} \label{lem:nb-refined}
Suppose that $\cD^i = \NB(r_i, p_i)$ for each $i \in \cI$.
The $(\cD^i)_{\cI}$-mechanism is $(\eps, \delta)$-DP for $\bQ$-linear query problem if the following holds for every pair of columns $i, i' \in [q]$ of $\bQ$:
\begin{align*}
d_{\eps}\left(\prod_{i \in \cI} (\NB(r_i, p_i) + \bQ_{i, j} - \bQ_{i, j'}) \middle\| \prod_{i \in \cI} \NB(r_i, p_i)\right) \leq \delta.
\end{align*}
\end{lemma}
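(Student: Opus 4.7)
The plan is to observe that this lemma is an immediate specialization of the preceding generic lemma to the case where each component noise distribution is taken to be a negative binomial, that is, $\cD^i := \NB(r_i, p_i)$. The preceding lemma gives a sufficient condition for the $\cbD$-noise addition mechanism to be $(\eps, \delta)$-DP on the $\bQ$-linear query problem, phrased as a bound on the $\eps$-hockey stick divergence between the product distributions $\prod_{i \in \cI}(\cD^i + Q_{i,j} - Q_{i,j'})$ and $\prod_{i \in \cI} \cD^i$ for every pair of columns $j, j'$ of $\bQ$. Under the substitution $\cD^i = \NB(r_i, p_i)$, this condition is literally the hypothesis stated in the present lemma.

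Concretely, my one-step plan is to invoke the previous lemma with the substitution above and quote its conclusion; the $(\cD^i)_{i \in \cI}$-mechanism of the present statement coincides with the $\cbD$-noise addition mechanism of the prior statement under this choice of $\cbD$. Since the negative binomial is a well-defined distribution over $\Z_{\geq 0}$, it is a valid choice for a $\cD^i$ in the generic setup, so there is no issue with applicability.

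There is essentially no obstacle, as the substantive content (reducing the DP guarantee to a shift-invariant hockey stick divergence between product noise distributions via \Cref{lem:dp-hockeystick}) was already discharged in the proof of the generic lemma. The point of isolating the negative binomial case as a separate statement is notational and practical: it anchors the parameter-selection task to a concrete, infinitely divisible family whose one-dimensional hockey stick divergences can be computed or bounded numerically, which is precisely what the experimental section will exploit in conjunction with \Cref{lem:basic-composition-hockeystick} to produce tighter parameters than the analytic bound in \Cref{cor:nb}.
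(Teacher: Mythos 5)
Your proposal is correct and matches the paper's own treatment exactly: the paper likewise presents \Cref{lem:nb-refined} as an immediate specialization of the preceding unnumbered generic lemma (itself a one-line consequence of shift-invariance of $d_\eps$ and \Cref{lem:dp-hockeystick}), introducing it with only the remark that one ``plugs in'' the negative binomial choice $\cD^i = \NB(r_i, p_i)$, and gives no further argument.
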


\section{Refined Analytic Bounds}

We use the following analytic bound, which is a more refined version of \Cref{cor:nb}. Notice below that we may select $\bt' = 2\bt$ where $\bt$ is as in \Cref{cor:nb} and satisfies the requirement. However, the following bound allows us greater flexibility in choosing $\bt'$. By optimizing for $\bt'$, we get 30-50\% reduction in the noise magnitude compared to using \Cref{cor:nb} together with the explicitly constructed $\bt$ from the proof of \Cref{thm:right-inv}.

\begin{lemma} \label{lem:analytic-refined}
Suppose that, for every pair of columns $\bq_j, \bq_{j'}$ of $\bQ \in \Z^{\cI \times [\Delta]}$, $\bq_j - \bq_{j'}$ is dominated by $\bt' \in \R^\cI_{+}$. For each $i \in \cI$, let $p_i = e^{-0.2\eps/ t'_i)}$, $r_i = 3(1 + \log(|\cI|/\delta))$ and $\cD^i = \NB(r_i, p_i)$. Then, $(\cD^i)_{i \in \cI}$-noise addition mechanism is $(\eps, \delta)$-DP for $\bQ$-linear query.
\end{lemma}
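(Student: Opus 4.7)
The plan is to adapt the proof of \Cref{cor:nb} essentially verbatim, exploiting that the refined hypothesis directly bounds pairwise column differences rather than individual columns (thus saving the factor of two that the triangle inequality introduced there). I first fix arbitrary neighbors $\bh, \bh' \in \Z_{\geq 0}^{[\Delta]}$ and observe that $\bh - \bh'$ is of the form $\bzero$, $\pm \bone_j$, or $\bone_j - \bone_{j'}$; with the natural convention that a user with input $0$ contributes the implicit zero column $\bq_0 := \bzero$, the difference $\bQ(\bh' - \bh)$ is always $\bq_j - \bq_{j'}$ for some $j, j'$ and hence dominated by $\bt'$ by hypothesis. Shift-invariance of $d_\eps$ then reduces the task to bounding
\begin{align*}
d_{\eps}\!\left(\prod_{i \in \cI} \cD^i \,\middle\|\, \prod_{i \in \cI}\bigl((\bQ(\bh'-\bh))_i + \cD^i\bigr)\right) \leq \delta.
\end{align*}

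Next I apportion the privacy budget proportionally to the per-coordinate sensitivity by setting $\eps_i := \eps \cdot |(\bQ(\bh'-\bh))_i|/t'_i$; domination gives $\sum_{i} \eps_i \leq \eps$, so \Cref{lem:basic-composition-hockeystick} reduces the task to proving the per-coordinate bound $d_{\eps_i}(\cD^i \,\|\, (\bQ(\bh'-\bh))_i + \cD^i) \leq \delta/|\cI|$. For each $i$ I would invoke \Cref{thm:nb-scalar} at scalar sensitivity $\Delta_i := |(\bQ(\bh'-\bh))_i|$; a direct algebraic check shows that $p_i = e^{-0.2\eps/t'_i} = e^{-0.2\eps_i/\Delta_i}$ and $r_i = 3(1 + \log(1/(\delta/|\cI|)))$ exactly match the scalar theorem's prescription for $(\eps_i, \delta/|\cI|)$-DP of $\Delta_i$-summation, and \Cref{lem:noiseaddition-hockey-stick} translates this DP guarantee into the required hockey-stick bound. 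Summing over $\cI$ and applying \Cref{lem:dp-hockeystick} then finishes the argument.

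The main point to track carefully is the bookkeeping around the scalar-theorem application: domination ensures $\Delta_i \leq t'_i$ for each $i$ (since $\sum_i \Delta_i/t'_i \leq 1$ with non-negative summands), which is precisely what makes the identity $e^{-0.2\eps/t'_i} = e^{-0.2\eps_i/\Delta_i}$ a legitimate reparametrization, and coordinates with $\Delta_i = 0$ contribute trivially to both sides. Beyond that I see no real obstacle; the argument is precisely the one for \Cref{cor:nb} with the vector $2\bt$ there replaced by the (possibly strictly smaller) $\bt'$, which is the sole point of the refinement and is what affords the extra parameter-search flexibility exploited in the experiments.
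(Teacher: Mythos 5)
Your proof is correct and takes essentially the same route as the paper's: reduce via shift-invariance to a single hockey-stick divergence, split the budget as $\eps_i = \eps\,|(\bQ(\bh'-\bh))_i|/t'_i$, invoke \Cref{lem:basic-composition-hockeystick}, and close each coordinate with \Cref{thm:nb-scalar}. Your extra bookkeeping (the $\bq_0=\bzero$ convention, the check $\Delta_i \le t'_i$ so that $\eps_i<1$, and the reparametrization $e^{-0.2\eps/t'_i}=e^{-0.2\eps_i/\Delta_i}$) is just a more explicit version of what the paper summarizes as ``from our choice of $\eps_i, p_i, r_i$.''
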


\begin{proof}
We denote the $(\cD^i)_{i \in \cI}$-noise addition mechanism by $\cM$. Furthermore, we write $\cbD$ to denote the noise distribution (i.e., distribution of $\bz$ where $z_i \sim \cD^i$).

Consider two neighboring datasets $\bh, \bh'$. We have
\begin{align*}
d_{\eps}(\cM(\bh) \| \cM(\bh'))
&= d_{\eps}\left(\bQ\bh + \cbD \middle\| \bQ\bh' + \cbD\right) \\
&= d_{\eps}\left(\cbD \| \bQ(\bh' - \bh) + \cbD\right) \\
&= d_{\eps}\left(\prod_{i \in \cI} \cD^i \middle\| \prod_{i \in \cI} ((\bQ(\bh' - \bh))_i + \cD^i)\right).
\end{align*}
Now, notice that our assumption implies that $\bQ(\bh' - \bh)$ is dominated by $\bt'$. Let $\eps_i := \eps \cdot \frac{|(\bQ(\bh' - \bh))_i|}{t'_i}$ for all $i \in \cI$; we have $\sum_{i \in \cI} \eps_i \leq \eps$. As a result, applying \Cref{lem:basic-composition-hockeystick}, we get
\begin{align*}
d_{\eps}(\cM(\bh) \| \cM(\bh')) 
&\leq \sum_{i \in \cI} d_{\eps_i}(\cD^i \| (\bQ(\bh' - \bh))_i + \cD^i) \\
(\text{From \Cref{thm:nb-scalar} and our choice of } \eps_i, p_i, r_i) &\leq \sum_{i \in \cI} \delta / |\cI|  \\
&= \delta,
\end{align*}
which concludes our proof.
\end{proof}

\section{Experimental Evaluation: Additional Details and Results}

In this section, we give additional details on how to set the parameters and provide additional experimental results, including the effects of $\eps, \delta$ on the $\Delta$-summation protocols and an experiment on the real summation problem on a census dataset.

\subsection{Noise parameter settings.}
\label{subsec:noise-parameter-search}

\paragraph{Our Correlated Noise Protocol.}
Recall that the notions of $\cDcentral, \hcD, \tcD^{\bs}, \eps^*, \eps_1, \eps_2, \delta_1, \delta_2$ from \Cref{thm:dsum-main}. We always set $\eps^*$ beforehand: in the $\Delta$-summation experiments (where the entire errors come from the Discrete Laplace noises), $\eps^*$ is set to be $0.9\eps$ as to minimize the noise. On the other hand, for real summation experiments where most of the errors are from discretization, we set $\eps^*$ to be $0.1\eps$; this is also to help reduce the number of messages as there is now more privacy budget allocated for the $\hcD$ and $\tcD^{\bs}$ noise. Once $\eps^*$ is set, we split the remaining $\eps - \eps^*$ among $\eps_1$ and $\eps_2$; we give between 50\% and 90\% to $\eps_1$ depending on the other parameters, so as to minimize the number of messages. 

Once the privacy budget is split, we first use Lemma~\ref{lem:analytic-refined} to set $\hcD = \NB(\hr, \hp), \tcD^{\bs} = \NB(r^{\bs}, p^{\bs})$; the latter also involves an optimization problem over $\bt'$, which turns out to be solvable in a relatively short amount of time for moderate values of $\Delta$. This gives us the ``initial'' parameters for $\br, \hp, r^{\bs}, p^{\bs}$. We then formulate a generic optimization problem, together with the condition in \Cref{lem:nb-refined}, and use it to further search for improved parameters. This last search for parameters works well for small values of $\Delta$ and results in as much as $40-80\%$ reduction in the expected number of messages. However, as $\Delta$ grows, this gain becomes smaller since the number of parameters grows linearly with $\Delta$ and the condition in \Cref{lem:nb-refined} also takes more resources to compute; thus, the search algorithm fails to yield parameters significantly better than the initial parameters.

\paragraph{Fragmented RAPPOR.}
We use the same approach as \citet{GKMP20-icml} to set the parameter (i.e., flip probability) of Fragmented RAPPOR. At a high level, we set a parameter in an ``optimistic'' manner, meaning that the error and expected number of messages reported might be even lower than the true flip probability that is $(\eps, \delta)$-DP. Roughly speaking, when we would like to check whether a flip probability is feasible, we take two explicit input datasets $X, X'$ and check whether $d_{\eps}(\cM(X) \| \cM(X')) \leq \delta$. This is ``optimistic'' because, even if the parameter passes this test, it might still not be $(\eps, \delta)$-DP due to some other pair of datasets. For more information, please refer to Appendix G.3 of \citet{GKMP20-icml}.

\subsection{Effects of $\eps$ and $\delta$ on $\Delta$-Summation Results}

Recall that in \Cref{sec:experiments}, we consider the RMSEs and the communication complexity as $\Delta$ changes, and for the latter also as $n$ changes. In this section, we extend the study to the effect of $\eps, \delta$.  Here, we fix $n = 10^6$ and $\Delta = 5$ throughout. When $\delta$ varies, we fix $\eps = 1$; when $\eps$ varies, we fix $\delta = 10^{-6}$.

\paragraph{Error.} In terms of the error, RAPPOR's error slowly increases as $\delta$ decreases, while other algorithms' errors remain constant. As $\eps$ decreases, all algorithms' errors also increase. These are shown in \Cref{fig:err-varying-eps-delta}.

\begin{figure*}
\centering
\begin{subfigure}[b]{0.4\textwidth}
\includegraphics[width=\textwidth]{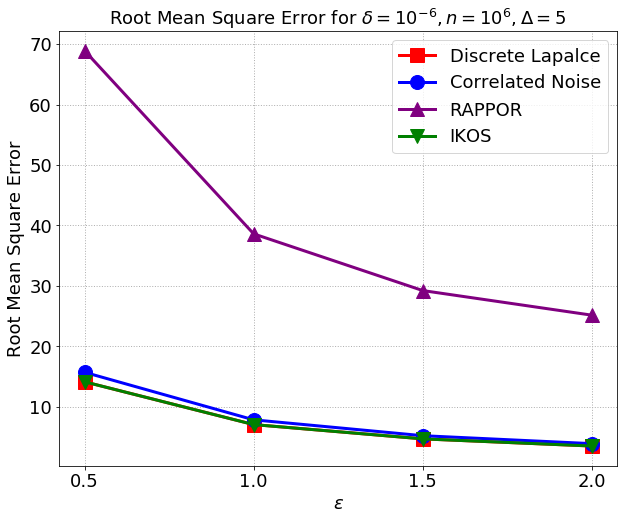}
\caption{RMSEs for varying $\eps$}
\end{subfigure}
\begin{subfigure}[b]{0.4\textwidth}
\includegraphics[width=\textwidth]{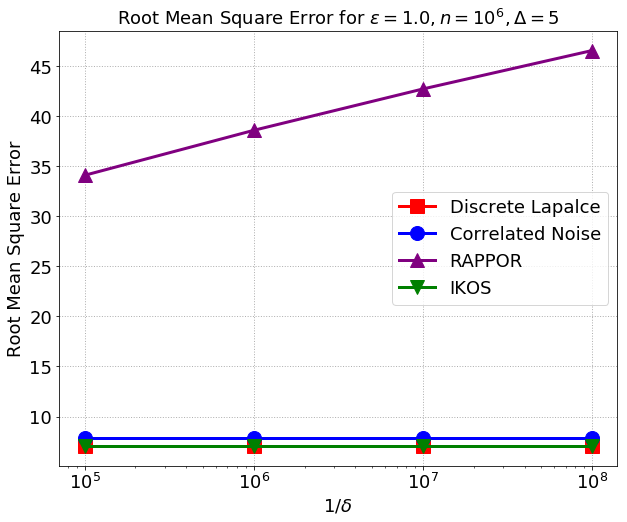}
\caption{RMSEs for varying $\delta$}
\end{subfigure}
\caption{Error of our ``correlated noise'' real-summation protocol compared to other protocols on the rent dataset when varying $\eps$ or $\delta$.}
\label{fig:err-varying-eps-delta}
\end{figure*}

\paragraph{Communication.} The number of bits required for the central DP algorithm remains constant (i.e., $\lceil \log \Delta \rceil$) regardless of the values of $\delta$ or $\eps$. In all other algorithms, the expected number of bits sent per user decreases when $\delta$ or $\eps$ increases. These are demonstrated in \Cref{fig:comm-varying-eps-delta}. We remark that this also holds for RAPPOR, even thought it might be hard to see from the plots because, in our regime of parameters, RAPPOR is within 0.1\% of the central DP algorithm.

\begin{figure*}
\centering
\begin{subfigure}[b]{0.4\textwidth}
\includegraphics[width=\textwidth]{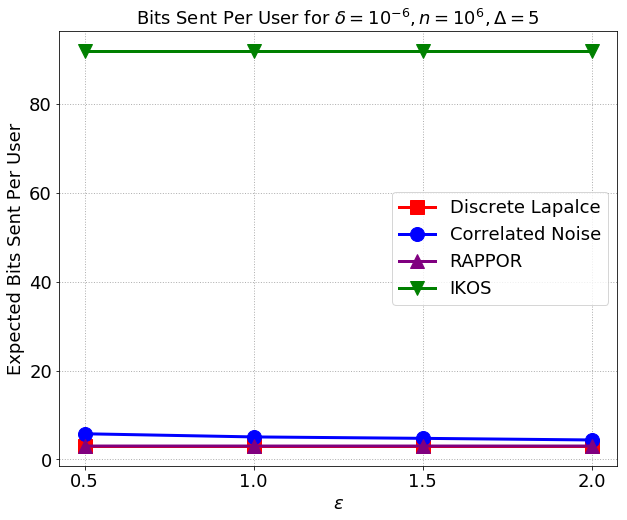}
\caption{Expected bits sent for varying $\eps$}
\end{subfigure}
\begin{subfigure}[b]{0.4\textwidth}
\includegraphics[width=\textwidth]{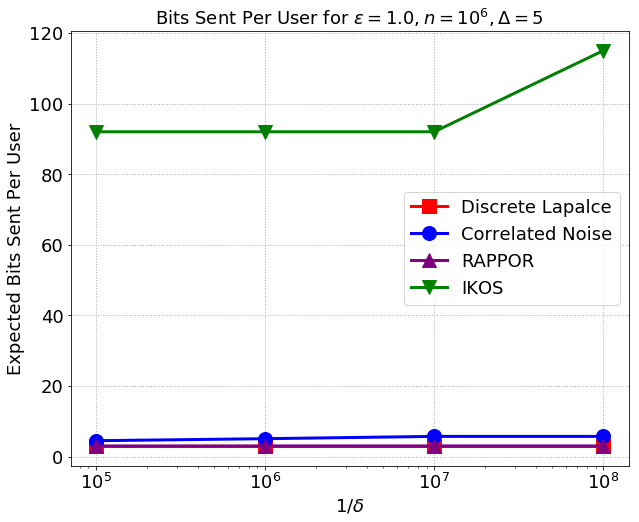}
\caption{Expected bits sent for varying $\delta$}
\end{subfigure}
\begin{subfigure}[b]{0.4\textwidth}
\vspace{10mm}
\includegraphics[width=\textwidth]{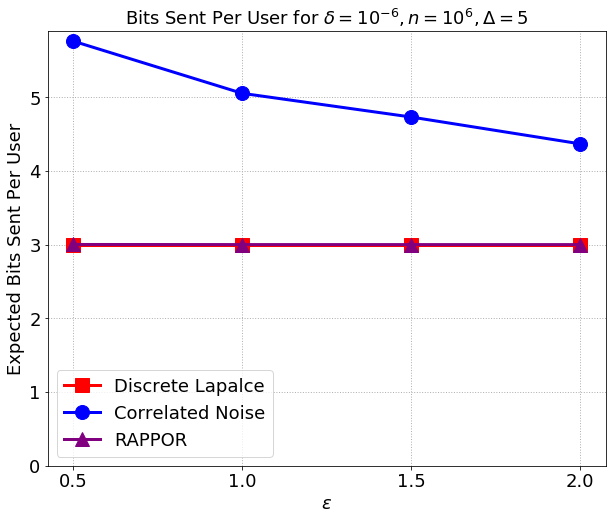}
\caption{Expected bits sent for varying $\eps$ (without IKOS)}
\end{subfigure}
\begin{subfigure}[b]{0.4\textwidth}
\includegraphics[width=\textwidth]{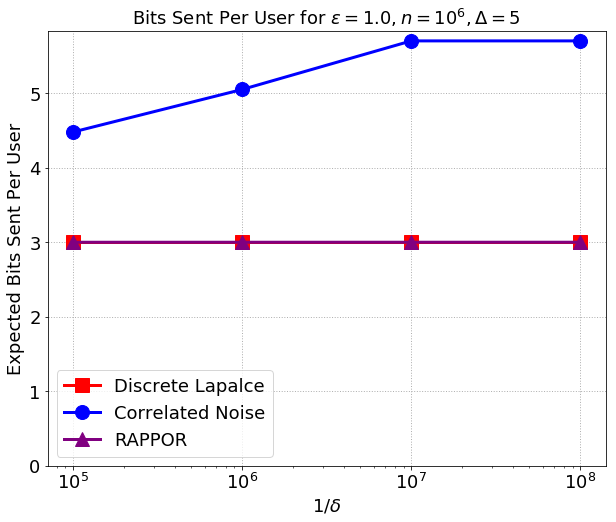}
\caption{Expected bits sent for varying $\delta$ (without IKOS)}
\end{subfigure}
\caption{Communication complexity of our ``correlated noise'' real-summation protocol compared to other protocols on the rent dataset when varying $\eps$ or $\delta$. Since the number of bits sent in the IKOS algorithm is usually very large, it is hard to distinguish the other three lines in the plots where it is included; so we provide the plots without the IKOS algorithm as well.}
\label{fig:comm-varying-eps-delta}
\end{figure*}

\subsection{Experiments on Real Summation}


We evaluate our algorithm on the public $1940$ US Census IPUMS dataset \cite{sobek2010integrated}. We consider the \emph{household rent} feature in this dataset; and consider the task of computing the average rent. We restrict to the rents below $\$ 400$; this reduced the number of reported rents from $68,290,222$ to $66,994,267$, preserving more than $98 \%$ of the data.
We use randomized rounding (see the proof of \Cref{thm:real-main}) with number of discretization levels\footnote{This means that we first use randomized rounding as in the proof of \Cref{thm:real-main}, then run the $\Delta$-summation protocol, and the analyzer just multiplies the estimate by $400/\Delta$.} $\Delta = 20, 50, 100, 200$. Note that 200 is the highest ``natural'' discretization that is non-trivial in our dataset, since the dataset only contains integers between 0 and 400.

We remark that, for all algorithms, the RMSE is input-dependent but there is still an explicit formula for it; the plots shown below use this formula. On the other hand, the communication bounds do not depend on the input dataset. 

\paragraph{Error.}
In terms of the errors, unfortunately all of the shuffle DP mechanisms incur significantly higher errors than the Laplace mechanism in central DP. The reason is that the discretization (i.e., randomized rounding) error is already very large, even for the largest discretization level of $\Delta = 200$. Nonetheless, the errors for both our algorithm and IKOS decrease as the discretization level increases. Specifically, when $\Delta = 200$, the RMSEs for both of these mechanisms are less than half the RMSE of RAPPOR for \emph{any} number of discretization levels. Notice also that the error for RAPPOR increases for large $\Delta$; the reason is that, as shown in \Cref{sec:experiments}, the error of RAPPOR grows with $\Delta$ even without discretization. These results are shown in \Cref{subfig:rent-error}.

\paragraph{Communication.}
In terms of communication, we compare to the \emph{non-private} ``Discretized'' algorithm that just discretizes the input into $\Delta$ levels and sends it to the analyzer; it requires only one message of $\lceil \log \Delta \rceil$ bits. RAPPOR incurs little (less than $1\%$) overhead,  in expected number of bits sent, compared to this baseline. When $\Delta$ is small, our correlated noise algorithm incurs a small amount of overhead (less than $20\%$ for $\Delta = 25$), but this overhead grows moderately as $\Delta$ grows (i.e., to less than $60\%$ for $\Delta = 200$). Finally, we note that, since the number $n$ of users is very large, the message length of IKOS (which must be at least $\log n\Delta$) is already large and thus the number of bits sent per user is very large --- more than 10 times the baseline.

\begin{figure*}
\centering
\begin{subfigure}[b]{0.49\textwidth}
\includegraphics[width=\textwidth]{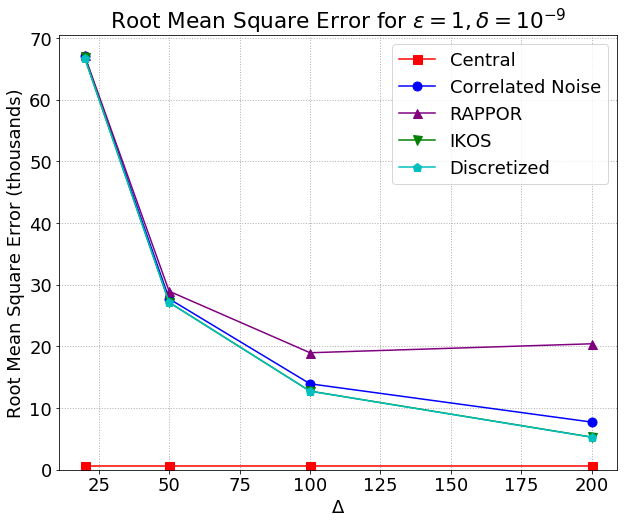}
\caption{RMSEs for varying $\Delta$}
\label{subfig:rent-error}
\end{subfigure}
\begin{subfigure}[b]{0.49\textwidth}
\includegraphics[width=\textwidth]{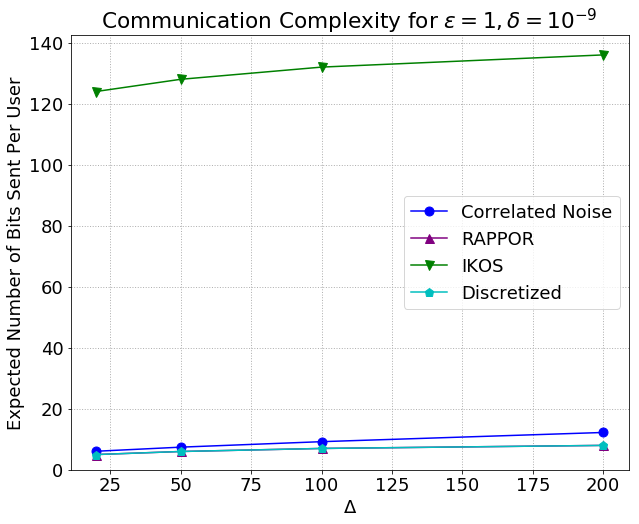}
\caption{Expected bits sent for varying $\Delta$}
\label{subfig:rent-comm}
\end{subfigure}
\caption{Error and communication complexity of our ``correlated noise'' real-summation protocol compared to other protocols on the rent dataset. The Laplace mechanism is included for comparison in the RMSE plot, even though it is not implementable in the shuffle model; it is not included in the communication plot since the communication is not well-defined. For both plots, we also include ``Discretized'' which is a \emph{non-private} algorithm that just discretizes the input and sends it to the analyzer, which then sums up all the incoming messages.}
\end{figure*}

\end{document}